\tikzset{outline/.style args={#1}{%
  draw=#1,thick,fill=#1!50},initial text={}}
\title{Are Parametric Markov Chains Monotonic?}
\author{Jip Spel, Sebastian Junges, Joost-Pieter Katoen}
\institute{RWTH Aachen University, Aachen, Germany\thanks{Supported by the DFG RTG 2236 ``UnRAVeL''.}
}
\tikzset{every state/.style={minimum size=2.2em}}
\newcommand{\LB}{\ensuremath{\textsf{lb}}}
\newcommand{\UB}{\ensuremath{\textsf{ub}}}
\newcommand{\ownsubsection}[1]{\medskip\noindent\textbf{#1}}
\newcommand{\ie}{i.e.\@\xspace}
\newcommand{\prophesy}{\textrm{PROPhESY}\xspace}
\newcommand{\prism}{\textrm{PRISM}\xspace}
\newcommand{\storm}{\textrm{Storm}\xspace}
\newcommand{\param}{\textrm{PARAM}\xspace}
\newcommand{\dtmc}{\mathcal{M}}
\newcommand{\pDtmcInit}[1][]{\ensuremath{\dtmc{#1}=(S{#1},\sinit{#1},T{#1},\Paramvar{#1},\probdtmc{#1})}}
\newcommand{\pr}{\ensuremath{\mathrm{Pr}}}
\newcommand{\reachPr}[3]{\ensuremath{\pr^{#1}_{#2}(\finally #3)}}
\newcommandtwoopt{\reachPrT}[2][][]{\ensuremath{\reachPr{#1}{#2}{T}}}
\newcommand{\finally}{\lozenge}
\newcommand{\Succ}{\ensuremath{\textsf{succ}}}
\newcommand{\colorpar}[3]{\colorbox{#1}{\parbox{#2}{#3}}}
\newcommand{\marginremark}[3]{\marginpar{\colorpar{#2}{\linewidth}{\color{#1}#3}}}
\newcommand{\sj}[1]{\marginremark{white}{blue}{\scriptsize{[SJ]~ #1}}}
\newcommand{\R}{\mathbb{R}}
\newcommand{\Q}{\mathbb{Q}}
\newcommand{\Ireal}{[0,\, 1]\subseteq\mathbb{R}}  
\newcommand{\Distr}{\mathit{Distr}}
\newcommand{\distDom}{X}
\newcommand{\distFunc}{\mu}
\newcommand{\distDomElem}{x}
\newcommand{\Var}{\ensuremath{V}\xspace}        
\newcommand{\Paramvar}{\ensuremath{{V}}\xspace}        
\newcommand{\rograph}{RO-graph}
\newcommand{\sinit}{s_{\mathit{I}}} 
\newcommand{\mdp}{\mathcal{M}}
\newcommand{\probdtmc}{\mathcal{P}}
\newcommand{\pdtmc}{\ensuremath{\mathcal{M}}}
\renewcommand{\Pr}{\ensuremath{\textnormal{Pr}}}
\newcommand{\bad}{\ensuremath{\lightning}}
\newcommand{\Paths}{\mbox{\sl Paths}}
\newcommand{\PathsLength}[1]{\mbox{\sl Paths}^{#1}}
\newcommand{\sol}[3][]{\ensuremath{\mathsf{Pr}_{}^{#1 \rightarrow #2}}}
\newcommand{\solWithM}[3][]{\ensuremath{\mathsf{Pr}_{#3}^{#1 \rightarrow #2}}}
\newcommand{\monIncr}[2][]{\ensuremath{#2{\uparrow_{#1}^{R}}}}
\newcommand{\monIncrNoR}[2][]{\ensuremath{#2{\uparrow_{#1}}}}
\newcommand{\monIncrLocal}[2][]{\ensuremath{#2{\uparrow_{#1}^{\ell, R}}}}
\newcommand{\monIncrLocalSteps}[3][]{\ensuremath{#2{\uparrow_{#1}^{\ell, {#3}, R}}}}
\newcommand{\monDecr}[2][]{\ensuremath{#2{\downarrow_{#1}^R}}}
\newcommand{\monDecrNoR}[2][]{\ensuremath{#2{\downarrow_{#1}}}}
\newcommand{\monDecrLocal}[2][]{\ensuremath{#2{\downarrow_{#1}^{\ell, R}}}}
\newcommand{\monDecrLocalSteps}[3][]{\ensuremath{#2{\downarrow_{#1}^{\ell, {#3}, R}}}}
\newcommand{\derivative}[2]{\ensuremath{\dfrac{\partial}{\partial #1}{#2}}}
\newcommand{\reachOrder}[1][]{\ensuremath{\prec_{#1}}}
\newcommand{\reachOrderAssumption}{\ensuremath{\prec^{\mathcal{A}}}}
\newcommand{\reachOrderEq}[1][]{\ensuremath{\preceq_{#1}}}
\newcommand{\reachOrderEqAssumption}{\ensuremath{\preceq^{\mathcal{A}}}}
\DeclareRobustCommand{\good}{\Simley{0.5}{0.2}\xspace}
\DeclareRobustCommand{\bad}{\Simley{-0.5}{0.2}\xspace}
\newcommand{\Simley}[2]{%
	\begin{tikzpicture}[scale=#2]
	\newcommand*{\SmileyRadius}{1.0}%
	;
	
	\pgfmathsetmacro{\eyeX}{0.5*\SmileyRadius*cos(30)}
	\pgfmathsetmacro{\eyeY}{0.5*\SmileyRadius*sin(30)}
	\draw [line width=0.25mm] (\eyeX-0.25,\eyeY) -- (\eyeX-0.25,\eyeY+0.375);
	\draw [line width=0.25mm] (-\eyeX+0.25,\eyeY) -- (-\eyeX+0.25,\eyeY+0.375);
	
	\pgfmathsetmacro{\xScale}{2*\eyeX/180}
	\pgfmathsetmacro{\yScale}{1.0*\eyeY}
	\draw[line width=0.25mm, domain=-\eyeX:\eyeX]
	plot ({\x},{
		-0.1+#1*0.15 
		-#1*1.75*\yScale*(sin((\x+\eyeX)/\xScale))-\eyeY});
	\end{tikzpicture}%
}%
\DeclareMathAlphabet{\mathpzc}{OT1}{pzc}{m}{it}
\def\presuper#1#2%
\spnewtheorem{definition}{Definition}{\bfseries}{\itshape}
\spnewtheorem{defboxed}[definition]{Definition}{\bfseries}{\itshape}
\newtcolorbox{mymathbox}[1][]{colback=white, sharp corners, #1}
\newcommand{\mathtext}[1]{\text{#1}}
\begin{document}
\lstset{mathescape=true, tabsize=2}

\maketitle
\begin{abstract}
This paper presents a simple algorithm to check whether reachability probabilities in parametric Markov chains are monotonic in (some of) the parameters. The idea is to construct---only using the graph structure of the Markov chain and local transition probabilities---a pre-order on the states. Our algorithm cheaply checks a sufficient condition for monotonicity. Experiments show that monotonicity in several benchmarks is automatically detected, and monotonicity can speed up parameter synthesis up to orders of magnitude faster than a symbolic baseline.
\end{abstract}

\section{Introduction}
Probabilistic model checking~\cite{DBLP:conf/lics/Katoen16,DBLP:reference/mc/BaierAFK18}  takes as input a Markov model together with a specification typically given in a probabilistic extension of LTL or CTL.
The key problem is computing the reachability probability to reach a set of target states.
Efficient probabilistic model checkers include \prism~\cite{DBLP:conf/cav/KwiatkowskaNP11} and \storm~\cite{DBLP:conf/cav/DehnertJK017}.
A major practical obstacle is that transition probabilities need to be precisely given.
Uncertainty about such quantities can be treated by specifying transition probabilities by intervals, as in interval Markov chains~\cite{DBLP:conf/lics/JonssonL91,DBLP:conf/fossacs/ChatterjeeSH08}, or by parametric Markov chains~\cite{Daws04}, which allow for expressing complex parameter dependencies. 

This paper considers parametric Markov chains (pMCs). 
Their transition probabilities are given by arithmetic expressions over real-valued parameters. 
A pMC represents an uncountably large family of Markov chains (MCs): each parameter value from the parameter space induces an MC. 
Reachability properties are easily lifted to pMCs; they are satisfied for a subset of the family of MCs, or equivalently, for a subset of the parameter values. 
Key problems are e.g., is there a parameter valuation such that a given specification $\varphi$ is satisfied (feasibility)?, do all parameter values within a given parameter region satisfy $\varphi$ (verification)?, which parameter values do satisfy $\varphi$ (synthesis)?, and for which parameter values is the probability of satisfying $\varphi$ maximal (optimal synthesis)?
Applications of pMCs include model repair~\cite{DBLP:conf/tacas/BartocciGKRS11,DBLP:conf/tase/ChenHHKQ013,DBLP:conf/nfm/PathakAJTK15,DBLP:journals/iandc/Chatzieleftheriou18,GOUBERMAN201932}, strategy synthesis in AI models such as partially observable MDPs~\cite{DBLP:conf/uai/Junges0WQWK018}, and optimising randomised distributed algorithms~\cite{DBLP:conf/srds/AflakiVBKS17}. 
\prism and \storm, as well as dedicated tools including \param~\cite{param_sttt} and \prophesy~\cite{DBLP:conf/cav/DehnertJJCVBKA15} support pMC analysis. 

Despite the significant progress in the last years in analysing pMCs~\cite{DBLP:conf/atva/QuatmannD0JK16,DBLP:journals/acta/CeskaDPKB17,DBLP:conf/atva/CubuktepeJJKT18}, the scalability of algorithms severely lacks behind methods for ordinary MCs.
There is little hope to overcome this gap. 
The feasibility problem for a reachability probability exceeding $1/2$ is ETR-complete (thus NP-hard)~\cite{DBLP:journals/corr/abs-1904-01503}.
Experiments show that symbolic computations rather than (floating-point) numeric computations have a major impact on analysis times~\cite{DBLP:conf/atva/QuatmannD0JK16}.

This paper takes a different approach and focuses on \emph{monotonicity}, in particular on (a) an algorithm to check whether  pMCs are monotonic in (some of) the parameters with respect to reachability probabilities, and (b) on investigating to what extent monotonicity can be exploited to accelerate parameter synthesis.
Monotonicity has an enormous potential to simplify pMC analysis; e.g., checking whether all points within a rectangle satisfy $\varphi$ reduces to checking whether a line fragment satisfies $\varphi$ when one parameter is monotonic.
Thus, the verification problem for an $n{+}k$-dimensional hyper-rectangle reduces to checking an $n$-dimensional rectangle when the pMC at hand is monotonic in $k$ parameters.
Similarly, determining a parameter instantiation that maximises the probability of $\varphi$ (optimal synthesis) simplifies considerably if all---just a single instance suffices---or some parameters are monotone. 
Similar problems at the heart of model repair~\cite{DBLP:conf/tacas/BartocciGKRS11,DBLP:conf/tase/ChenHHKQ013,DBLP:conf/nfm/PathakAJTK15,DBLP:journals/iandc/Chatzieleftheriou18,GOUBERMAN201932} also substantially benefit from monotonicity.

Unfortunately, determining monotonicity is as hard as parameter synthesis.
The key idea therefore is to construct---using the graph structure of the pMC and local transition probabilities---a pre-order on the states that is used to check a sufficient condition for monotonicity. 
The paper gradually develops a semi-decision algorithm, starting with acyclic pMCs, to the general setting with cycles.
The algorithm uses assumptions indicating whether a state is below (or equivalent to) another one, and techniques are described to discharge these assumptions.
Possible outcomes of our algorithms are: a pMC is monotonic increasing in a certain parameter for a given region, monotone decreasing, or unknown.
Experiments with a prototypical implementation built on top of \storm show that monotonicity is detected automatically and scalable in several benchmarks from the literature. 
In addition, exploiting monotonicity in a state-of-the-art parameter synthesis can lead to speed-ups of up to an order of magnitude.
(Proofs of our results can be found in the appendix.)

\section{Preliminaries and Problem Statement}
\label{sec:preliminaries}

A \emph{probability distribution} over a finite or countably infinite set $\distDom$ is a function $\distFunc\colon\distDom\rightarrow\Ireal$ with $\sum_{\distDomElem\in\distDom}\distFunc(\distDomElem)=1$. 
The set of all distributions on $\distDom$ is denoted by $\Distr(\distDom)$.
Let $\vec{a} \in \R^n$ denote $(a_1, \hdots, a_n)$, and $\vec{e}_i$ denote the vector with $e_j = 1$ if $i=j$ and $e_j = 0$ otherwise.
The set of multivariate polynomials over ordered variables $\vec{x} = (x_1,\hdots,x_n)$ is denoted $\mathbb{Q}[\vec{x}]$. 
An \emph{instantiation} for a finite set $\Paramvar$ of real-valued variables is a function $u\colon V \rightarrow \R$. 
We typically denote $u$ as a vector $\vec{u} \in \R^n$ with $u_i \colonequals u(x_i)$.
A polynomial $f$ can be interpreted as a function $f\colon \R^n \rightarrow \R$, where $f(\vec{u})$ is obtained by substitution i.e., $f[\vec{x} \leftarrow \vec{u}]$, where each occurrence of $x_i$ in $f$ is replaced by $u(x_i)$.

\begin{definition}[Multivariate monotonic function]
	\label{def:monotoneFunction}
	A function $f\colon \R^n \rightarrow \R$ is \emph{monotonic increasing in $x_i$ on set $R \subset \R^n$}, denoted $\monIncr[x_i]{f}$, if \[ f(\vec{a}) \leq f(\vec{a} + b \cdot \vec{e}_i) \qquad \forall \vec{a} \in R \; \forall b \in \R_{\geq 0}.\]
	A function $f$ is \emph{monotone decreasing in $x_i$ on $R$}, denoted $\monDecr[x_i]{f}$, if $\monIncr[x_i]{({-}f)}$.
	A function $f$ is \emph{monotone increasing (decreasing) on $R$}, denoted $\monIncr{f}$ ($\monDecr{f}$), if $\monIncr[x_i]{f}$ ($\monDecr[x_i]{f}$) for all $x_i \in \vec{x}$, respectively.
\end{definition}
If function $f$ is continuously differentiable on the open set $R \subset \R^n$, then \linebreak\({\forall \vec{u}\in R}.~\derivative{x_i}{f(\vec{u})} \geq 0 \implies \monIncr[x_i]{f}. \)
In particular, any $f \in \Q[\vec{x}]$ is continuously differentiable on $\R^n$.

\begin{definition}[pMC]\label{def:pmdp}
A \emph{parametric Markov Chain (pMC)} is a tuple $\pDtmcInit$ with a finite set $S$ of \emph{states}, an \emph{initial state} $\sinit \in S$, a finite set $T \subseteq S$ of \emph{target states}, a finite set $\Paramvar$ of real-valued variables \emph{(parameters)} and a \emph{transition function} $\probdtmc \colon S \times S \rightarrow \mathbb{Q}[V]$.

\end{definition}

We define $\Succ(s) = \{ s' \in S \mid \probdtmc(s,s') \neq 0 \}$.
A pMC $\pdtmc$ is a \emph{(discrete-time) Markov chain} (MC) if the transition function yields \emph{well-defined} probability distributions, \ie, $\probdtmc(s, \cdot) \in \Distr(S)$ for each $s\in S$. 
A state $s$ is called \emph{parametric}, if $\probdtmc(s,s') \not\in \Q$ for some $s' \in S$. 
Applying an \emph{instantiation} $\vec{u}$ to a pMC $\pdtmc$ yields $\pdtmc[\vec{u}]$ by replacing each $f\in\mathbb{Q}[V]$ in $\dtmc$ by $f(\vec{u})$.
An instantiation $\vec{u}$ is \emph{well-defined} (for $\mdp$) if the $\mdp[\vec{u}]$ is an MC.
A well-defined instantiation $\vec{u}$ is \emph{graph-preserving} (for $\mdp$) if the topology is preserved, that is, for all $s,s' \in S$ with $\probdtmc(s,s') \neq 0$ implies $\probdtmc(s,s')(\vec{u}) \neq 0$.
A set of instantiations is called a \emph{region}.
A region $R$ is well-defined (graph-preserving) if  $\forall \vec{u} \in R$, $\vec{u}$ is well-defined (graph-preserving).

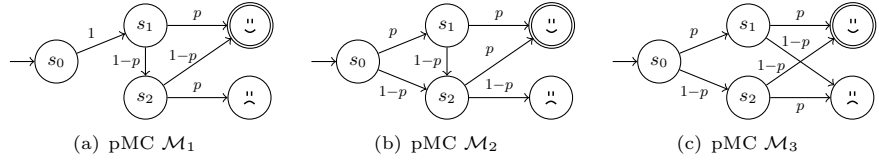
\begin{figure}[t]
\centering
\subfigure[pMC $\pdtmc_1$]{
\begin{tikzpicture}[scale=0.8, every node/.style={scale=0.8}, node distance=0.8cm]
		\node[state, initial]           (s0)           {$s_0$};
		\node[state]           (s1) [right=0.6cm of s0, yshift=0.6cm]               {$s_1$};
		\node[state]           (s2) [right=0.6cm of s0, yshift=-0.6cm]               {$s_2$};
		\node[state, accepting]           (g) [right= of s1]               {$\good$};
		\node[state]           (b) [right= of s2]               {$\bad$};
		
		\draw[->] (s0) edge node[auto, scale=0.8] {$1$} (s1);
		\draw[->] (s1) edge node[left, scale=0.8] {$1{-}p$} (s2);
		\draw[->] (s1) edge node[auto, scale=0.8] {$p$} (g);
		\draw[->] (s2) edge node[left, scale=0.8, xshift=0.1cm, yshift=0.1cm] {$1{-}p$} (g);
		\draw[->] (s2) edge node[auto, scale=0.8] {$p$} (b);
		
\end{tikzpicture}
\label{fig:nonmon}
}
\subfigure[pMC $\pdtmc_2$]{
\begin{tikzpicture}[scale=0.8, every node/.style={scale=0.8}, node distance=0.8cm]
		\node[state, initial]           (s0)           {$s_0$};
		\node[state]           (s1) [right=0.6cm of s0, yshift=0.6cm]               {$s_1$};
		\node[state]           (s2) [right=0.6cm of s0, yshift=-0.6cm]               {$s_2$};
		\node[state, accepting]           (g) [right= of s1]               {$\good$};
		\node[state]           (b) [right= of s2]               {$\bad$};
		
		\draw[->] (s0) edge node[auto, scale=0.8] {$p$} (s1);
		\draw[->] (s0) edge node[pos=0.3,below, scale=0.8, yshift=-.1cm] {$1{-}p$} (s2);
		\draw[->] (s1) edge node[left, scale=0.8] {$1{-}p$} (s2);
		\draw[->] (s1) edge node[auto, scale=0.8] {$p$} (g);
		\draw[->] (s2) edge node[auto, scale=0.8] {$p$} (g);
		\draw[->] (s2) edge node[auto, scale=0.8] {$1{-}p$} (b);
		
\end{tikzpicture}
\label{fig:mon}
}
\subfigure[pMC $\pdtmc_3$]{
\begin{tikzpicture}[scale=0.8, every node/.style={scale=0.8}, node distance=0.8cm]
		\node[state, initial]           (s0)           {$s_0$};
		\node[state]           (s1) [right=0.6cm of s0, yshift=0.6cm]               {$s_1$};
		\node[state]           (s2) [right=0.6cm of s0, yshift=-0.6cm]               {$s_2$};
		\node[state, accepting]           (g) [right= of s1]               {$\good$};
		\node[state]           (b) [right= of s2]               {$\bad$};
		
		\draw[->] (s0) edge node[auto, scale=0.8] {$p$} (s1);
		\draw[->] (s0) edge node[pos=0.3,below, scale=0.8, yshift=-.1cm] {$1{-}p$} (s2);
		\draw[->] (s1) edge node[pos=0.1,xshift=0.05cm,right, scale=0.8] {$1{-}p$} (b);
		\draw[->] (s1) edge node[auto, scale=0.8] {$p$} (g);
		\draw[->] (s2) edge node[pos=0.4, xshift=-0.05cm,left, scale=0.8] {$1{-}p$} (g);
		\draw[->] (s2) edge node[below, scale=0.8] {$p$} (b);
		
\end{tikzpicture}
\label{fig:inc}
}
\vspace{-0.5em}
\caption{Three simple pMCs}
\label{fig:3pmcs}
\end{figure}

\begin{example}
	Fig.~\ref{fig:3pmcs} shows three pMCs, all with a single parameter $p$.
	Instantiation $\vec{u} = \{ p \mapsto 0.4 \}$ 
	is graph-preserving for all these pMCs.
	Instantiation $\vec{u}' = \{ p \mapsto 1 \}$ is well-defined, but not graph-preserving, while $\vec{u}'' = \{ p \mapsto 2 \}$ is not well-defined.	
\end{example}
\begin{remark}
Most pMCs in the literature are linear, i.e., all transition probabilities are linear.  
Many pMCs---including those in Fig.~\ref{fig:3pmcs}---are \emph{simple}, i.e., $\probdtmc(s,s') \in \{ p, 1{-}p \mid p \in \Var \} \cup \Q$ for all $s,s'\in S$.
For simple pMCs, all well-defined instantiations (graph-preserving) are in $[0,1]^{|\Var |}$ (in $(0,1)^{| \Var |}$).
\end{remark}

For a parameter-free MC $\dtmc$, $\reachPrT[s][\dtmc] \in [0,1] \subseteq \R$ denotes the probability that from state $s$ the target $T$ is reached. 
For a formal definition, we refer to, e.g., \cite[Ch.~10]{BK08}.
For pMC $\pdtmc$, $\reachPrT[s][\pdtmc]$ is not a constant, but rather a function $\solWithM[s]{T}{\pdtmc} \colon \Paramvar \rightarrow [0,1]$,
s.t.\ $\solWithM[s]{T}{\pdtmc}(\vec{u}) = \reachPrT[s][{\pdtmc[\vec{u}]}]$.
We call $\solWithM[s]{T}{\pdtmc}$ the \emph{solution function}, and for conciseness, we typically omit $\pdtmc$.
For two graph-preserving instantiations $\vec{u}, \vec{u}'$, we have that $\sol[s]{T}{\pdtmc}(\vec{u}) = 0$ implies $\sol[s]{T}{\pdtmc}(\vec{u}') = 0$ (analogous for ${=}1$).
We simply write $\sol[s]{T}{\pdtmc} = 0$ (or ${=}1$).
\begin{example}
	For the pMC in Fig.~\ref{fig:nonmon}, the solution function $\sol[s]{T}{\pdtmc}$ is $p + (1-p)^2$.
	For the pMCs in Fig.~\ref{fig:mon} and \ref{fig:inc}, it is $-p^3+p^2+p$ and $p^2+(1-p)^2$, respectively.
	\end{example}
The closed-form of $\sol[s]{T}{\pdtmc}$ on a graph-preserving region is a rational function over $V$, i.e., a fraction of two polynomials over $V$. 
Various methods for computing this closed form on a graph-preserving region have been proposed~\cite{Daws04,param_sttt,DBLP:journals/tse/FilieriTG16,DBLP:journals/corr/abs-1709-02093,DBLP:conf/cav/DehnertJJCVBKA15}. 
Such a closed-form can be exponential in the number of parameters~\cite{DBLP:journals/corr/abs-1709-02093}, and is typically (very) large already with one or two parameters~\cite{param_sttt,DBLP:conf/cav/DehnertJJCVBKA15}.
On a graph-preserving region, $\sol[s]{T}{\pdtmc}$ is continuously differentiable~\cite{DBLP:conf/atva/QuatmannD0JK16}.

The parameter feasibility problem considered in e.g.\ \cite{param_sttt,DBLP:conf/cav/DehnertJJCVBKA15,DBLP:conf/atva/QuatmannD0JK16,DBLP:journals/corr/Chonev17,DBLP:journals/corr/abs-1709-02093,DBLP:conf/atva/CubuktepeJJKT18,DBLP:conf/atva/GainerHS18} is: 
\emph{Given a pMC $\pdtmc$, a threshold $\lambda \in [0,1]$, and a graph-preserving region $R$, is there an instantiation $\vec{u} \in R$ s.t.\ $\solWithM[s_I]{T}{\pdtmc}(\vec{u}) \geq \lambda$?}
This problem is square-root-sum hard~\cite{DBLP:journals/corr/Chonev17}. For any fixed number of parameters, this problem is decidable in P~\cite{DBLP:journals/corr/abs-1709-02093}.
\begin{example}
For the pMC in Fig~\ref{fig:nonmon}, $R = [0.4, 0.6]$, and $\lambda = 0.9$, the result to the parameter feasibility is \texttt{false}, as $\max_{\vec{u}\in R} \sol[s]{T}{\pdtmc}(\vec{u}) < 0.9$.
\end{example}
\begin{definition}[Monotonicity in pMCs]
For pMC $\pDtmcInit$, parameter $p \in V$, and graph-preserving region $R$, we call $\pdtmc$ \emph{monotonic increasing} in $p$ on $R$, written $\monIncr[p]{\pdtmc}$, if $\monIncr[p]{\sol[\sinit]{T}{\pdtmc}}$.
\emph{Monotonic decreasing}, written $\monDecr[p]{\pdtmc}$, is defined analogously.	
\end{definition}

\begin{example}
	The pMC in Fig.~\ref{fig:mon} is monotonic in $p$ on $(0,1)$, as its derivative $-3p^2 + 2p + 1$ is strictly positive on $(0,1)$.
	The pMC in Fig.~\ref{fig:nonmon} is not, as witnessed by the derivative $1 - 2(1{-}p)$.
\end{example}
The above example immediately suggests a complete algorithm to decide whether $\monIncr[p]{\pdtmc}$ (or analogously $\monDecr[p]{\pdtmc}$):
Compute the solution function, symbolically compute the derivative w.r.t.\ parameter $p$, and ask a solver (e.g., an SMT-solver for non-linear real arithmetic~\cite{DBLP:journals/cca/JovanovicM12}) for the existence of a negative instantiation in $R$. 
If no such instantiation exists, then $\monIncr[p]{\pdtmc}$.
Observe that the size of the solution function and its derivative are in the same order of magnitude.
This algorithm runs in polynomial time for any fixed number of parameters, yet the practical runtime even for medium-sized problems is unsatisfactory, due to the high costs of the symbolic operations involved.
The result below motivates to look for \emph{sufficient criteria for monotonicity that can be practically efficiently checked}.
\begin{restatable}{theorem}{montheorem}
\label{lem:complexitymonproblem}
pMC verification\footnote{The complement of the parameter feasibility problem.} is polynomial-time reducible to the decision problem whether a pMC is monotonic.
\end{restatable}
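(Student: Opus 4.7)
The plan is to reduce pMC verification to pMC monotonicity checking by means of a simple linear gadget that makes monotonicity in a fresh auxiliary parameter equivalent to the verification answer.

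Concretely, given an instance of pMC verification---pMC $\pDtmcInit$, threshold $\lambda \in \Q \cap [0,1]$, and graph-preserving region $R$---I would construct in polynomial time a new pMC $\pdtmc' = (S', \sinit', T, V \cup \{q\}, \probdtmc')$ as follows. Add a fresh parameter $q$, a fresh initial state $\sinit'$, a gadget state $s_\lambda$, and a fresh sink state $s_\bot \notin T$. Set $\probdtmc'(\sinit', \sinit) = q$, $\probdtmc'(\sinit', s_\lambda) = 1-q$, $\probdtmc'(s_\lambda, t) = \lambda$ for some (dedicated) $t \in T$, $\probdtmc'(s_\lambda, s_\bot) = 1-\lambda$, and $\probdtmc'(s_\bot, s_\bot) = 1$; on all remaining states retain the original transitions of $\pdtmc$. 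This construction adds only a constant number of states and transitions, and $\lambda$ appears as a literal rational constant, so $\pdtmc'$ is computable from $\pdtmc$ in polynomial time.

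For the correctness argument, let $R' = R \times (0,1)$ (which is graph-preserving for $\pdtmc'$ whenever $\lambda \in (0,1)$; the degenerate cases $\lambda \in \{0,1\}$ can be handled by a trivial case distinction on the verification question). A straightforward path-unfolding (or one-step expansion at $\sinit'$) gives
\[
\sol[\sinit']{T}{\pdtmc'}(\vec{u}, q) \;=\; q \cdot \sol[\sinit]{T}{\pdtmc}(\vec{u}) \;+\; (1-q)\cdot \lambda,
\]
so $\derivative{q}{\sol[\sinit']{T}{\pdtmc'}} = \sol[\sinit]{T}{\pdtmc}(\vec{u}) - \lambda$. Since the solution function is continuously differentiable on any graph-preserving region, $\monDecr[q]{\pdtmc'}$ on $R'$ holds if and only if $\sol[\sinit]{T}{\pdtmc}(\vec{u}) \leq \lambda$ for every $\vec{u} \in R$, i.e., exactly when the verification instance is positive. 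A single call to the monotonicity oracle on $\pdtmc'$, parameter $q$, and region $R'$ therefore decides pMC verification.

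The main thing to be careful about is not the mathematics but the framing: I need the monotonicity decision problem to permit specifying both the parameter of interest and the direction (or, at minimum, monotonicity in one designated parameter), which is consistent with Definition (Monotonicity in pMCs) as given. The second minor obstacle is graph-preservation of the gadget, handled by confining $q$ to the open interval $(0,1)$ and, for $\lambda \in \{0,1\}$, replacing the gadget by a direct edge to $T$ or to $s_\bot$ and answering verification directly. Everything else is a routine polynomial-time construction, which establishes the claimed reduction.
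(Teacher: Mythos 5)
Your reduction is correct and is essentially the paper's own argument with its two stages composed into one: the paper first reduces verification to comparing two states in the reachability order via a $\lambda$-coin gadget, and then reduces that comparison to monotonicity by prepending a fresh initial state with a fresh parameter, which yields exactly your solution function $q\cdot \mathsf{Pr}^{\sinit\rightarrow T}(\vec{u}) + (1-q)\cdot\lambda$ and the same derivative argument. The only nit is a direction flip: since pMC verification (as formalised in the appendix) asks whether the reachability probability is at least $\lambda$ on all of $R$, the oracle call should be for monotone \emph{increasing} in $q$; monotone decreasing corresponds to the probability being at most $\lambda$ everywhere.
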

\noindent 
Proving \emph{non-}monotonicity is often simpler---finding three instantiations along a line that disprove monotonicity sufficess---, and less beneficial for parameter synthesis. 
This paper focuses on proving monotonicity rather than disproving it.
\begin{example}
The three instantiations on Fig.~\ref{fig:nonmon}: $p \mapsto 0.3, 0.5, 0.9$ yield reachability probabilities: $0.79$, $0.75$, $0.91$. Thus neither $\monIncr[p]{\pdtmc}$ nor $\monDecr[p]{\pdtmc}$ on $R = [0.3,0.9]$.
\end{example}

\subsubsection*{Problem statement.}
Given a pMC $\pdtmc$, a parameter $p$, and a region $R$, construct an \emph{efficient} algorithm that determines either $\monIncr[p]{\pdtmc}$, $\monDecr[p]{\pdtmc}$,  or ``unknown''.
\medskip\par	

In the following, let $\pDtmcInit$ be a pMC with $R$ a graph-preserving region.
Let $\good$ ($\bad$) denote all states $s\in S$ with $\sol[s]{T}{\pdtmc} = 1$ ($\sol[s]{T}{\pdtmc} = 0$).
By a standard preprocessing~\cite{BK08}, we assume a single $\good$ and $\bad$ state.
We call a parameter $p$ monotonic, if the solution function of the pMC is monotonic in $p$.

\section{A Sufficient Criterion for Monotonicity}
\label{sec:ro}

In this section, we combine reasoning about the underlying graph structure of a pMC 
and local reasoning about transition probabilities of single states 
to deduce a sufficient criterion for monotonicity.

\medskip\noindent\textbf{Reachability orders.}
\label{subsec:reachOrder}
\begin{definition}[Reachability order/RO-graph]
\label{def:exhaustiveReachOrder}
An ordering relation \(\reachOrderEq[R,T]\, \subseteq S \times S\) is a \emph{reachability order w.r.t.\ $T \subseteq S$ and region $R$} if for all $s,t \in S$: 
\[ 
s\reachOrderEq[R,T] t 
\quad \text{implies} \quad
\forall \vec{u}\in R.~ \sol[s]{T}{\pdtmc}(\vec{u}) \leq \sol[t]{T}{\pdtmc}(\vec{u}).
\]
The order $\reachOrderEq[R,T]$ is called \emph{exhaustive} if the reverse implication holds too.
The Hasse-diagram\footnote{That is, 
$G = (S,E)$ with $E = \{(s,t) \mid s,t \in S \wedge s \reachOrderEq t \wedge (\not\exists s'\in S.~s\reachOrderEq s' \reachOrderEq t)\} $.} 
for a reachability order is called an \emph{RO-graph}.
\end{definition}

The relation $\reachOrderEq[R,T]$ is a reflexive (aka: non-strict) pre-order.
The exhaustive reachability order is the union of all reachability orders, and always exists.
Let $\equiv_{R,T}$ denote the kernel of $\reachOrderEq[R,T]$, i.e., $\equiv_{R,T} \, = \,  \reachOrderEq[R,T] \, \cap \, \reachOrderEq[R,T]^{{-}1}$.
If $\reachOrderEq[R,T]$ is exhaustive:
\[ s \equiv_{R,T} t \quad \text{ iff } \quad \forall \vec{u} \in R.~ \sol[s]{T}{\pdtmc}(\vec{u}) = \sol[t]{T}{\pdtmc}(\vec{u}).\] 
We often omit the subscript $R,T$ from $\reachOrderEq$ and $\equiv$ for brevity.
Let $[s]$ denote the equivalence class w.r.t.\ $\equiv$, i.e., $[s] = \{ t\in S \mid s \equiv t\}$, and $[S]$ denote the set of equivalence classes on $S$.
We lift $\reachOrderEq$ to sets in a point-wise manner, i.e., $s \reachOrderEq X$ denotes $s \reachOrderEq x$ for all $x \in X$.
In the following, we use w.l.o.g.\ that each reachability order $\reachOrderEq$ satisfies $\bad \reachOrderEq S \setminus \{\bad \}$ and $S \setminus \{ \good \} \reachOrderEq \good$.

\begin{figure}[t]
	\centering
	\subfigure[RO-graph for $\pdtmc_1$]{
	\begin{tikzpicture}[scale=0.8, every node/.style={scale=0.8}, node distance=0.5cm, baseline]
	\node (good) {\good};
	\node (s1)  [left=of good] {$s_2$}; 
	\node (s0) [left=of s1] {$[s_0]$};
	\node ()  [left=of good, yshift=-0.5cm] {}; 
	\node (bad) [left=of s0] {\bad};
	
	\draw[<-] (good) -- (s1);
	\draw[<-] (s1) -- (s0);
	\draw[<-] (s0) -- (bad);
	\end{tikzpicture}
	\label{fig:rograph_small_pmc_exhaustive}
}
\subfigure[RO-graph for $\pdtmc_1$]{
	\begin{tikzpicture}[scale=0.8, every node/.style={scale=0.8}, node distance=0.5cm,baseline]
	\node (good) {\good};
	\node (s1)  [left=of good, yshift=-0.35cm] {\phantom{$[$}$s_2$\phantom{$]$}}; 
	\node (s0) [left=of good, yshift=0.35cm] {$[s_0]$};
	\node (bad) [left=of s0, yshift=-0.35cm] {\bad};
		\node () [right=of good]{};
	\node() [left=of bad]{};
	
	\draw[<-] (good) -- (s1);
	\draw[<-] (good) -- (s0);	
	\draw[<-] (s1) -- (bad);
	\draw[<-] (s0) -- (bad);
	\end{tikzpicture}
	\label{fig:rograph_small_pmc_non_exhaustive}
}
\subfigure[RO-graph for $\pdtmc_3$]{
	\begin{tikzpicture}[scale=0.8, every node/.style={scale=0.8}, node distance=0.5cm, baseline]
	
	\node (good) {\good};
	\node (s1) [left=of good] {$s_1$};
	\node (s0) [left=of good, yshift=-0.35cm] {$s_0$};
	\node (s2) [left=of good, yshift=0.35cm] {$s_2$};
	\node (bad) [left=of s1] {\bad};
	\node () [right=of good]{};
	\node() [left=of bad]{};
	
	\draw[<-] (good) -- (s1);
	\draw[<-] (good) -- (s0);
	\draw[<-] (good) -- (s2);
	\draw[<-] (s1) -- (bad);
	\draw[<-] (s2) -- (bad);
	\draw[<-] (s0) -- (bad);
	\end{tikzpicture}
	\label{fig:rograph_small_pmc_exhaustive_incomparable}
}
%
\label{fig:pMCsWithRO}
\vspace{-0.5em}
\caption{RO-graphs for some of the pMCs in Fig.~\ref{fig:3pmcs}}
\end{figure}
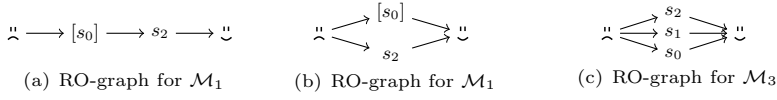 	
\begin{example}
	\label{ex:exhaustiveReachOrder}
	Consider pMC $\pdtmc_1$ in Fig.~\ref{fig:nonmon} with arbitrary region $R$. 
	Fig.~\ref{fig:rograph_small_pmc_exhaustive} shows the {\rograph} of the exhaustive reachability order, with $[s_0] = \{s_0, s_1\}$. 
	Fig.~\ref{fig:rograph_small_pmc_non_exhaustive} shows a non-exhaustive reachability order for $\pdtmc_1$.
	Next, consider Fig.~\ref{fig:inc} with region $R = (0,1)$. 
	States $s_1$ and $s_2$ are incomparable:
	 For $\vec{u}_1\in R$ with $\vec{u}_1(p)< \frac{1}{2}:~ \sol[s_1]{T}{\pdtmc}(\vec{u}_1) < \sol[s_2]{T}{\pdtmc}(\vec{u}_1)$, while for $ \vec{u}_2\in R$ with $\vec{u}_2(p) > \frac{1}{2}:~ \sol[s_1]{T}{\pdtmc}(\vec{u}_2) > \sol[s_2]{T}{\pdtmc}(\vec{u}_2)$.
	 Analogously, $s_0, s_1$ and $s_0, s_2$ are pairwise incomparable.
	Fig.~\ref{fig:rograph_small_pmc_exhaustive_incomparable} depicts the corresponding exhaustive reachability order.	
\end{example}

\medskip\noindent\textbf{Local monotonicity.}
\label{subsec:monotonicity}
Next, we show how a local notion of monotonicity suffices to infer monotonicity.
\begin{definition}[Locally monotonic increasing]
	\label{def:localMonIncr}
	$\sol[s]{T}{\pdtmc}$ is \emph{locally monotonic increasing} in parameter $p$ (at $s$) on region $R$, denoted $\monIncrLocal[p]{\sol[s]{T}{\pdtmc}}$, if $\forall \vec{u} \in R$:
	\[\left(\sum_{s' \in \Succ(s)} \left(\derivative{p}{\probdtmc(s,s')}\right) \cdot \sol[s']{T}{\pdtmc}  \right)(\vec{u}) \geq 0.\]
\end{definition}
\emph{Locally monotonic decreasing}, denoted $\monDecrLocal[p]{\sol[s]{T}{\pdtmc}}$, is defined analogously ($\leq 0$).
Thus, while global monotonicity considers the derivative of the full solution function, local monotonicity only considers the derivative of the first transition.
\begin{example}
	For state $s_0$ in Fig.~\ref{fig:mon}, we compute: \[ \left(\derivative{p}{p}\right) \cdot \sol[s_1]{T}{\pdtmc} + \left(\derivative{p}{(1-p)}\right) \cdot \sol[s_2]{T}{\pdtmc} = 1 \cdot \big(p+(1-p){\cdot}p\big) - 1 \cdot p = p-p^2.\] 
	By checking for which instantiations this function is non-negative, we obtain that $s_0$ is locally monotonic increasing on any graph-preserving $R$.
	Similar computations show that $s_1$ and $s_2$ are locally monotonic increasing.
	In Fig.~\ref{fig:inc}, $s_1$ is locally monotonic increasing, $s_2$ is locally monotonic decreasing, and $s_0$ is neither locally monotonic increasing nor decreasing.
\end{example}
Observe that non-parametric states are monotonic increasing and decreasing in any parameter.
Reachability orders may induce local monotonicity:

\begin{restatable}{lemma}{lemLocalMonSucc}
	\label{lem:localmonsucc3}
	Let $\Succ(s) =\{s_1, \ldots, s_n\}$, $P(s,s_i) = f_i$ and $\forall j > i. s_j \reachOrderEq s_i $. Then:
	\[\monIncrLocal[p]{\sol[s]{T}{\pdtmc}} \quad\text{ iff }\quad \exists i \in [1, \hdots, n].\Big(\forall j \leq i.\; \monIncr[p]{f_j} \text{ and } \forall j > i.\; \monDecr[p]{f_j} \Big).\]
\end{restatable}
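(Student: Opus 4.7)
The proof hinges on two elementary facts. First, the transition probabilities at $s$ form a distribution, so $\sum_{j=1}^n f_j \equiv 1$ on $R$; differentiating in $p$ yields $\sum_{j=1}^n g_j \equiv 0$ for $g_j \colonequals \partial f_j / \partial p$. Second, writing $v_j \colonequals \sol[s_j]{T}{\pdtmc}$, the reachability-order hypothesis gives the pointwise chain $v_1 \geq v_2 \geq \cdots \geq v_n$ on $R$. Local monotonic increasing at $s$ is then the condition $\sum_j g_j(\vec{u}) \, v_j(\vec{u}) \geq 0$ for every $\vec{u} \in R$.

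For the $(\Leftarrow)$ direction, take the witness $i$ supplied by the sign pattern and fix $\vec{u} \in R$. Setting $c \colonequals \sum_{j \leq i} g_j(\vec{u})$, the constraint $\sum_j g_j = 0$ gives $\sum_{j>i} g_j(\vec{u}) = -c$, and the sign hypotheses force $c \geq 0$. Using $v_j \geq v_i$ with $g_j \geq 0$ for $j \leq i$ yields $\sum_{j \leq i} g_j v_j \geq c \cdot v_i$; the symmetric estimate with $v_j \leq v_i$ and $g_j \leq 0$ for $j > i$ yields $\sum_{j>i} g_j v_j \geq -c \cdot v_i$. Summing cancels the $v_i$ term and delivers $\sum_j g_j v_j \geq 0$.

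For the harder $(\Rightarrow)$ direction, I would argue by contrapositive via summation by parts. Introducing partial sums $h_k \colonequals \sum_{j \leq k} g_j$ (so $h_0 = h_n = 0$), one rewrites
\[
\sum_{j=1}^n g_j\, v_j \;=\; \sum_{k=1}^{n-1} h_k \, (v_k - v_{k+1}),
\]
in which each gap $(v_k - v_{k+1})$ is non-negative on $R$. Assuming no $i$ realises the sign pattern, I would identify an index $k^\star$ and a valuation $\vec{u}^\star \in R$ at which $h_{k^\star}(\vec{u}^\star) < 0$ while simultaneously $v_{k^\star}(\vec{u}^\star) > v_{k^\star+1}(\vec{u}^\star)$ strictly, using that consecutive successors that are not $\equiv$-equivalent admit such a separating point on the graph-preserving region. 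Bounding the remaining Abel terms would then render the total strictly negative at $\vec{u}^\star$, contradicting local monotonicity.

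The main obstacle is precisely this last step: translating the combinatorial failure of the sign pattern into a pointwise analytic witness. One must pin down $\vec{u}^\star$ so that the sign change in $h_{k^\star}$ and the strict gap $v_{k^\star} > v_{k^\star+1}$ occur at the same valuation, and show that the non-negative contributions from the other Abel terms cannot overwhelm the negative one at $k^\star$. This is where the structural assumptions—graph-preservation of $R$, the ordering being induced by a reachability order, and the freedom of the successor values—enter in an essential way.
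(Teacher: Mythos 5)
Your $(\Leftarrow)$ direction is correct and complete, and it is actually tidier than the paper's own treatment: the appendix proves the two-successor case (Lemma~\ref{lem:localmonincr}) by a chain of equivalences and then only sketches $n=3$ by case distinction on $i$, substituting $f_1 = 1 - f_2 - f_3$ and grouping the remaining terms into products of two non-positive factors; your partial-sum estimate handles all $n$ uniformly in one stroke. This is also the only direction the paper ever uses: Thm.~\ref{thm:monIncr} and the monotonicity check in Alg.~\ref{alg:full} invoke the sign pattern solely to \emph{establish} local monotonicity, never to refute it.

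The $(\Rightarrow)$ direction you could not close is a genuine gap in your write-up, but not one you should try to fill, because that direction is not provable as stated. The paper's appendix in fact only establishes $(\Leftarrow)$ for $n \geq 3$, and its $n=2$ ``iff'' chain silently cancels the factor $\sol[s_1]{T}{\pdtmc} - \sol[s_2]{T}{\pdtmc}$ in the last step, which is unsound wherever $s_1 \equiv s_2$. For $n \geq 3$ the ``only if'' direction fails even with strictly separated successors: take constant successor values $v_1 = 0.9$, $v_2 = 0.5$, $v_3 = 0.4$ and transition functions with $\partial f_1/\partial p = 1$, $\partial f_2/\partial p = -2$, $\partial f_3/\partial p = 1$ (e.g.\ $f_1 = 0.1 + p$, $f_2 = 0.5 - 2p$, $f_3 = 0.4 + p$ on $p \in (0,0.1)$). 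Then $\sum_j g_j v_j = (v_1 - v_2) - (v_2 - v_3) = 0.3 > 0$, so local monotonic increase holds, yet the derivative pattern $+,-,+$ admits no splitting index $i$. In your Abel decomposition this is precisely the case where some $h_{k}$ is negative but its gap $v_k - v_{k+1}$ is dominated by the positive terms, which is why no witness $\vec{u}^\star$ can be manufactured. The honest content of the lemma is the implication from the sign pattern to local monotonicity, which you proved; your instinct that the converse step is the obstacle was correct, and it is an obstacle in the paper as well.
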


\begin{restatable}{theorem}{thmMonIncr}
	\label{thm:monIncr}
	\[\Big(\forall s \in S.\,\monIncrLocal[p]{\sol[s]{T}{\pdtmc}}\Big) \implies \monIncr[p]{\sol[\sinit]{T}{\pdtmc}}.\]	
\end{restatable}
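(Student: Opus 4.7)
The plan is to derive a linear equation for the vector $D$ of partial derivatives $D_s \coloneqq \partial \sol[s]{T}{\pdtmc} / \partial p$, exhibit it as a non-negative linear combination of the local-monotonicity vector, and conclude by the sufficient condition for monotonicity mentioned just after Definition~\ref{def:monotoneFunction}.

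\textbf{Step 1: Setting up the derivative system.} After the standard preprocessing, every state in $S' \coloneqq S \setminus \{\good,\bad\}$ almost surely reaches $\{\good,\bad\}$ under any graph-preserving instantiation. For $s\in S'$, the solution function satisfies the Bellman equation $\sol[s]{T}{\pdtmc} = \sum_{s'\in S'}\probdtmc(s,s')\cdot \sol[s']{T}{\pdtmc} + \probdtmc(s,\good)$, while $\sol[\good]{T}{\pdtmc}=1$ and $\sol[\bad]{T}{\pdtmc}=0$. Collect the transition probabilities between $S'$-states into a matrix $A\in\mathbb{Q}[V]^{S'\times S'}$ with $A_{s,s'} = \probdtmc(s,s')$, and write $\vec\sigma$ for the $S'$-restricted vector of solution functions and $\vec b$ for the vector with $b_s = \probdtmc(s,\good)$. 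Then $\vec\sigma = A\vec\sigma + \vec b$. Differentiating with respect to $p$ on the graph-preserving region $R$ and using linearity,
\begin{equation*}
D_{S'} \;=\; A \cdot D_{S'} \;+\; \vec L, \quad\text{where } L_s \;=\; \sum_{s'\in S'}\Bigl(\tfrac{\partial}{\partial p}\probdtmc(s,s')\Bigr)\sol[s']{T}{\pdtmc} + \tfrac{\partial}{\partial p}\probdtmc(s,\good).
\end{equation*}
Because $\sol[\good]{T}{\pdtmc}=1$ and $\sol[\bad]{T}{\pdtmc}=0$, the vector $\vec L$ is exactly the vector of local-monotonicity expressions from Definition~\ref{def:localMonIncr}; in particular $\vec L(\vec u)\geq 0$ componentwise for every $\vec u\in R$ by hypothesis.

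\textbf{Step 2: Inverting $I-A$ with non-negative entries.} Fix $\vec u \in R$. Since $R$ is graph-preserving and $\good$, $\bad$ are reachable from every $s\in S'$, the instantiated matrix $A(\vec u)$ is the transient sub-stochastic matrix of an absorbing Markov chain. Hence $A(\vec u)^n \to 0$, the Neumann series $(I-A(\vec u))^{-1} = \sum_{n\geq 0} A(\vec u)^n$ converges, and all its entries are non-negative (they are expected visit counts). Consequently $D_{S'}(\vec u) = (I-A(\vec u))^{-1}\vec L(\vec u) \geq 0$ componentwise. In particular $D_{\sinit}(\vec u) \geq 0$ for every $\vec u \in R$.

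\textbf{Step 3: From non-negative derivative to monotonicity.} The function $\sol[\sinit]{T}{\pdtmc}$ is continuously differentiable on the graph-preserving region $R$ (as noted in Section~\ref{sec:preliminaries}, citing the corresponding result of Quatmann et al.). Since its partial derivative with respect to $p$ is non-negative everywhere on $R$, the implication recalled right after Definition~\ref{def:monotoneFunction} gives $\monIncr[p]{\sol[\sinit]{T}{\pdtmc}}$, which is the desired conclusion $\monIncr[p]{\pdtmc}$.

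The only delicate step is Step~2: one must be certain that the instantiated matrix $A(\vec u)$ is indeed transient so that $(I-A(\vec u))^{-1}$ exists and is non-negative. This is where graph-preservation of $R$ together with the preprocessing (every state reaches $\{\good,\bad\}$) is used; without it, the linear system for $D_{S'}$ need not be uniquely solvable and the representation $D_{S'} = (I-A)^{-1}\vec L$ would fail.
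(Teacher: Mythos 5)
Your proof is correct, but it takes a genuinely different route from the paper's. The paper unrolls the Markov chain along paths: it introduces a notion of ``locally monotonic increasing for $n$ steps'', shows by induction on $n$ that local monotonicity at every state propagates to $n$-step local monotonicity, and then passes to the limit $n\to\infty$ to identify the limiting expression with the derivative of $\sol[\sinit]{T}{\pdtmc}$ (splitting paths into those that have already reached $T$, those absorbed in the zero states, and the undecided remainder whose contribution vanishes). Your argument packages that entire unrolling into the single algebraic fact that the differentiated Bellman system reads $D_{S'}=A\,D_{S'}+\vec L$ with $\vec L$ exactly the vector of local-monotonicity expressions, and that $(I-A(\vec u))^{-1}=\sum_{n\geq 0}A(\vec u)^n$ exists and is entrywise non-negative because $A(\vec u)$ is the transient part of an absorbing chain (which is where graph-preservation and the preprocessing into a single $\good$ and $\bad$ are used, exactly as you flag). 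The two proofs are morally the same---the Neumann series is the path unrolling---but yours has two concrete advantages: it handles states of arbitrary out-degree uniformly, whereas the paper only showcases $|\Succ(s)|\leq 2$ and asserts the general case is analogous; and it avoids the interchange of differentiation with an infinite sum over paths, which the paper asserts in its proof of the second claim without detailed justification, by instead leaning on the already-cited continuous differentiability of the solution function and a finite linear system. The paper's approach, in turn, yields the intermediate notion of multi-step local monotonicity, which the authors later point to as an avenue for strengthening the criterion. Both proofs share the same final step (non-negative partial derivative on $R$ implies monotonicity via the remark after Definition~\ref{def:monotoneFunction}) and hence also share whatever care is needed about $R$ being open and closed under the shifts $\vec a\mapsto\vec a+b\vec e_i$; that is an issue of the paper's framework, not of your argument.
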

\begin{example}
	Consider Fig.~\ref{fig:mon}, 
	observe that $s_2 \reachOrderEq s_1$.
	Applying Lem.~\ref{lem:localmonsucc3} to $s_0$ with monotonic increasing $f {=} p$,
	yields that $s_0$ is locally monotonic increasing. 
	All states are locally monotonic increasing, thus $\pdtmc_2$ is (globally) monotonic increasing.
\end{example}

\medskip\noindent\textbf{Sufficient reachability orders.}
Above, we only regard the reachability order locally, in order to deduce (local) monotonicity. 
Thus, to deduce (global) monotonicity from a reachability order, it suffices to compute a subset of the exhaustive reachability order. 
\begin{definition}[Sufficient reachability order]
	\label{def:sufficientRO}
	Reachability order 
	$\reachOrderEq[]$ is \emph{sufficient} for $s \in S$ if for all $s_1, s_2 \in \Succ(s)$:  $\left(s_1 \reachOrderEq[] s_2 \,\vee\,s_2 \reachOrderEq[] s_1 \right)$ holds.
	The reachability order is \emph{sufficient} for $\pdtmc$ if it is sufficient for all parametric states.
\end{definition}
A reachability order $\reachOrderEq$ is thus sufficient for $s$ if $(\Succ(s), \reachOrderEq)$ is a total order.
A sufficient reachability order does not necessarily exist.
\begin{example}
	\label{ex:sufficientReachOrder}
	The reachability order in Fig.~\ref{fig:rograph_small_pmc_exhaustive} is sufficient for all states. 
	The reachability order in Fig.~\ref{fig:rograph_small_pmc_exhaustive_incomparable} is not sufficient for $s_0$.
\end{example}

\begin{corollary}
Given a pMC $\pdtmc$ s.t.\ all states $s \in S$ have $|\Succ(s)|\leq2$, and only monotonic transition functions. 
If reachability order $\reachOrderEq$ is sufficient for $s$, then $\sol[s]{T}{\pdtmc}$ is locally monotonic increasing/decreasing on region $R$ in all parameters.
\end{corollary}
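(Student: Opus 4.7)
The plan is to apply Lemma~\ref{lem:localmonsucc3} at each state separately and one parameter at a time, using the degree bound $|\Succ(s)|\leq 2$ to keep the case split tiny. I would fix an arbitrary state $s\in S$ and an arbitrary parameter $p\in V$ and show that at $s$ the solution function is either locally monotonic increasing or locally monotonic decreasing in $p$; since $p$ is arbitrary, this is exactly the conclusion of the corollary.

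The trivial case $|\Succ(s)|\leq 1$ is handled first: the sole outgoing probability must then be the constant $1$ (as the $\probdtmc(s,\cdot)$ sum to $1$ on $R$), its derivative in $p$ is zero, and the sum in Definition~\ref{def:localMonIncr} vanishes, so $s$ is simultaneously locally monotonic increasing and decreasing. In the main case $|\Succ(s)|=2$, write $\Succ(s)=\{s_1,s_2\}$. Sufficiency at $s$ forces the two successors to be comparable under $\reachOrderEq$, so WLOG $s_2\reachOrderEq s_1$, matching the ordering hypothesis of Lemma~\ref{lem:localmonsucc3}. Let $f_i=\probdtmc(s,s_i)$; because $f_1+f_2=1$ as polynomials (the identity holds on $R$, which has non-empty interior), the derivatives satisfy $\derivative{p}{f_2}=-\derivative{p}{f_1}$. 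Combined with the hypothesis that every transition function is monotonic in each parameter, this forces $f_1$ and $f_2$ to be monotonic in $p$ in opposite directions, treating a constant function as both increasing and decreasing.

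What remains is a two-line case split. If $\monIncr[p]{f_1}$, then $\monDecr[p]{f_2}$, and Lemma~\ref{lem:localmonsucc3} with $i=1$ immediately yields $\monIncrLocal[p]{\sol[s]{T}{\pdtmc}}$. If instead $\monDecr[p]{f_1}$, then $\monIncr[p]{f_2}$, and the symmetric version of Lemma~\ref{lem:localmonsucc3} for local monotonic decreasing (obtained by flipping all inequalities in Definition~\ref{def:localMonIncr} and correspondingly swapping $\monIncr$ with $\monDecr$ in the conclusion) delivers $\monDecrLocal[p]{\sol[s]{T}{\pdtmc}}$. I do not anticipate a genuine obstacle: the only delicate subcase is when one of the $f_i$ is constant in $p$, and this is absorbed uniformly by the convention that constants are both monotonic increasing and decreasing, so no extra case is needed.
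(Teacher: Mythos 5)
Your proof is correct and takes essentially the same route as the paper, whose entire argument is the one-liner ``follows immediately from Def.~\ref{def:sufficientRO} and Lem.~\ref{lem:localmonsucc3}'': sufficiency makes the (at most two) successors comparable, and the monotonicity of the transition functions then matches one of the cases of Lem.~\ref{lem:localmonsucc3} (or its decreasing counterpart). The only cosmetic remark is that your detour through $f_1+f_2=1$ holding as a polynomial identity (which strictly needs $R$ to have non-empty interior) is unnecessary, since the hypothesis that \emph{all} transition functions are monotonic already covers every combination of directions for $f_1$ and $f_2$ via the choices $i\in\{1,2\}$ in the lemma.
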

The proof follows immediately from Def.~\ref{def:sufficientRO} and Lem.~\ref{lem:localmonsucc3}.
A similar statement holds for the general case with arbitrarily many successors.
The reachability order $\reachOrderEq$ is called a \emph{witness} for monotonicity of parameter $p$ on $R$ whenever either all states are locally increasing or all are locally decreasing in $p$.
A sufficient $\reachOrderEq$ (for $\pdtmc$) does in general not imply global monotonicity of $\pdtmc$.
\begin{example}
\label{ex:inconclusiveorder}
	While the order shown in Fig.~\ref{fig:rograph_small_pmc_exhaustive} is sufficient for pMC $\pdtmc_1$ \linebreak(Fig.~\ref{fig:nonmon}), $\pdtmc_1$ is not monotonic: state $s_1$ is locally increasing, but state $s_2$ is locally decreasing. 
\end{example}
We call such reachability orders (with the pMC) \emph{inconclusive} for $p$ and $R$.

\section{Automatically Proving Monotonicity}
\label{sec:automatic}
In this section, we discuss how to automatically construct a sufficient reachability order to deduce monotonicity of (some of) the parameters in the given pMC.
The following negative result motivates us to consider a heuristic approach:
\begin{restatable}{lemma}{lemcomplexity}
\label{lem:complexity}
pMC verification is polynomial-time reducible to the decision problem whether two states are ordered by the exhaustive reachability order.
\end{restatable}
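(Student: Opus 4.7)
The plan is to reduce pMC verification to the decision problem ``is $s \reachOrderEq[R,T] t$?'' by augmenting the given pMC with a ``thermometer'' state whose reachability probability to $T$ is exactly the verification threshold $\lambda$. Given a verification instance $(\pdtmc, R, \lambda)$ with $\lambda \in \Q \cap [0,1]$ asking whether $\forall \vec{u} \in R.\; \sol[\sinit]{T}{\pdtmc}(\vec{u}) \leq \lambda$, I first dispatch the boundary cases $\lambda \in \{0,1\}$: since $R$ is graph-preserving, the truth values of ``$\sol[\sinit]{T}{\pdtmc}(\vec{u}) = 0$'' and ``$\sol[\sinit]{T}{\pdtmc}(\vec{u}) = 1$'' are constant across $\vec{u} \in R$ (as noted after the definition of solution functions), so these cases reduce to a purely graph-theoretic reachability check in $\pdtmc$.

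For the generic case $\lambda \in (0,1)\cap\Q$, I invoke the paper's standard preprocessing to assume unique states $\good,\bad$ with $\sol[\good]{T}{\pdtmc} = 1$ and $\sol[\bad]{T}{\pdtmc} = 0$. I then construct $\pdtmc'$ from $\pdtmc$ in constant additional space by inserting a fresh state $s^*$ with two constant transitions $\probdtmc'(s^*, \good) = \lambda$ and $\probdtmc'(s^*, \bad) = 1 - \lambda$, leaving the parameter set and every existing transition unchanged. Since $\lambda, 1-\lambda \in (0,1)$, the region $R$ (lifted trivially to the unchanged parameter set) is still graph-preserving for $\pdtmc'$. By construction, $\sol[s^*]{T}{\pdtmc'}(\vec{u}) = \lambda$ for every $\vec{u} \in R$, and because $s^*$ is unreachable from $\sinit$, $\sol[\sinit]{T}{\pdtmc'} = \sol[\sinit]{T}{\pdtmc}$ pointwise on $R$.

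Unfolding Def.~\ref{def:exhaustiveReachOrder} for the exhaustive reachability order of $\pdtmc'$ then yields the central equivalence: $\sinit \reachOrderEq[R,T] s^*$ if and only if $\forall \vec{u} \in R.\; \sol[\sinit]{T}{\pdtmc'}(\vec{u}) \leq \sol[s^*]{T}{\pdtmc'}(\vec{u}) = \lambda$, which is precisely the original verification query. Hence any oracle that decides pairwise membership in the exhaustive reachability order solves pMC verification, and the reduction itself is clearly polynomial in the size of $(\pdtmc, R, \lambda)$.

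There is essentially no hard technical step; the only subtlety worth flagging is aligning the inequality convention of ``verification'' (strict vs.\ non-strict w.r.t.\ $\lambda$) with the non-strict $\leq$ baked into the reachability order. If verification is stated with strict $<\lambda$, a sufficiently small rational perturbation of the thermometer probability---say, replacing $\lambda$ by $\lambda - \varepsilon$ for an $\varepsilon$ chosen to avoid any explicit rational in the instance---bridges the two conventions at no asymptotic cost.
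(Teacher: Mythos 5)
Your reduction is correct and is essentially the paper's own construction: the paper likewise adds a gadget state $A$ (with fresh sinks $B,C$ and $\probdtmc'(A,B)=\lambda$, $B$ added to $T$) whose reachability probability is constantly $\lambda$, and then queries the exhaustive order between $\sinit$ and that state, just with the mirrored convention $A \reachOrderEq \sinit \iff \forall \vec{u}.\ \sol[\sinit]{T}{\pdtmc}(\vec{u}) \geq \lambda$ matching its non-strict definition of pMC verification. Your closing aside about bridging a strict-$<$ convention via an $\varepsilon$-perturbation is both unnecessary (the paper's problem is stated non-strictly) and not actually sound in general (no single rational $\varepsilon$ separates $\forall \vec{u}.\ \sol[\sinit]{T}{\pdtmc}(\vec{u})<\lambda$ from $\forall \vec{u}.\ \sol[\sinit]{T}{\pdtmc}(\vec{u})\leq\lambda-\varepsilon$ when the supremum over $R$ equals $\lambda$ without being attained), but this does not affect the main argument.
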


Our algorithmic approach is based on RO-graphs.
We first consider how these graphs can be used to determine monotonicity (Sect.~\ref{subsec:monousingro}). 
The main part of this section is devoted to constructing RO-graphs.
We start with a basic idea for obtaining reachability orders for acyclic pMCs (Sect.~\ref{subsec:constructro}).
To get \emph{sufficient} orders, the algorithm is refined by automatically making \emph{assumptions}, such as $s \reachOrderEq s'$ and/or  $s' \reachOrderEq s$ (Sect.~\ref{subsec:makingassumptions}).
We then describe how these assumptions can be discharged (Sect.~\ref{sec:discharge}), and finally extend the algorithm to treat cycles (Sect.~\ref{subsec:cycles}).
 
\ownsubsection{Checking monotonicity using a reachability order.}
\label{subsec:monousingro}
The base is to check whether the RO-graph is a witness for monotonicity.
This is done as follows.
Using the RO-graph, we determine global monotonicity of the pMC by checking each parametric state $s$ for local monotonicity (cf. Thm~\ref{thm:monIncr}). 
To decide whether $s$ is local monotonic, we consider the ordering of its direct successors and the derivatives of the probabilistic transition functions and apply Lem.~\ref{lem:localmonsucc3}.

\subsection{Constructing reachability orders}
\label{subsec:constructro}
Our aim is to construct a  (not necessarily sufficient) reachability order from the graph structure of a pMC.
Let us introduce some standard notions.
For reachability order $\reachOrderEq$ and $X \subseteq S$, $\UB(X) = \{ s \in S \mid X \reachOrderEq s \} $ and $\LB(X) = \{ s \in  S \mid s \reachOrderEq X \}$ denote the upper and lower bounds of $X$.
As $\bad \reachOrderEq S$ and $S \reachOrderEq \good$, these sets are non-empty.
Furthermore, let $\min(X) = \{ x \in X \mid {\not\exists} x' \in X. x' \reachOrderEq x \}$, and $\max(X) = \{ x \in X \mid \not\exists x' \in X. x \reachOrderEq x' \}$.
If $(X,\reachOrderEq)$ is a lattice, then it has a unique minimal upper bound (and maximal lower bound). Then:
\begin{restatable}{lemma}{lemReachrelativesucc}
\label{lem:reachrelativesucc}
	For $s \in S$, either $\Succ(s) \subseteq [s]$ or $\LB(\Succ(s)) \reachOrder s \reachOrder \UB(\Succ(s))$.
\end{restatable}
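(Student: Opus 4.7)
The plan is to exploit the Bellman characterization of reachability probabilities, which on a graph-preserving region writes $\sol[s]{T}{\pdtmc}(\vec{u})$ as a proper convex combination of successor values. Concretely, for every $\vec{u}\in R$ we have $\sol[s]{T}{\pdtmc}(\vec{u}) = \sum_{s'\in\Succ(s)} \probdtmc(s,s')(\vec{u})\cdot\sol[s']{T}{\pdtmc}(\vec{u})$, with each weight $\probdtmc(s,s')(\vec{u})$ strictly positive (by graph-preservingness of $R$) and the weights summing to $1$ (since $\mdp[\vec{u}]$ is an MC). The degenerate cases $s=\good$ and $s=\bad$ are absorbed into case~1 by the standard self-loop preprocessing, so from now on I focus on the remaining states where the above Bellman identity is the natural recursion for $\sol[s]{T}{\pdtmc}$.

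Given this setup, I would fix an arbitrary $\ell\in\LB(\Succ(s))$ and bound termwise. Since $\ell\reachOrderEq s'$ for every $s'\in\Succ(s)$, replacing each $\sol[s']{T}{\pdtmc}(\vec{u})$ in the Bellman identity by the no-larger quantity $\sol[\ell]{T}{\pdtmc}(\vec{u})$, and using that the positive weights sum to one, yields $\sol[\ell]{T}{\pdtmc}(\vec{u})\le\sol[s]{T}{\pdtmc}(\vec{u})$ for all $\vec{u}\in R$, i.e., $\ell\reachOrderEq s$. The symmetric substitution gives $s\reachOrderEq u$ for any $u\in\UB(\Succ(s))$.

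To obtain strictness, I would argue by contradiction: suppose some $\ell\in\LB(\Succ(s))$ satisfies $\ell\equiv s$, so $\sol[\ell]{T}{\pdtmc}(\vec{u})=\sol[s]{T}{\pdtmc}(\vec{u})$ throughout $R$. Subtracting, the Bellman identity becomes $\sum_{s'\in\Succ(s)}\probdtmc(s,s')(\vec{u})\bigl(\sol[s']{T}{\pdtmc}(\vec{u})-\sol[\ell]{T}{\pdtmc}(\vec{u})\bigr)=0$. Every summand is non-negative (as $\ell\reachOrderEq s'$) and every weight is strictly positive, so each individual summand vanishes, giving $\sol[s']{T}{\pdtmc}(\vec{u})=\sol[s]{T}{\pdtmc}(\vec{u})$ for all $s'\in\Succ(s)$ and all $\vec{u}\in R$. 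Hence $\Succ(s)\subseteq[s]$, placing us in case~1. Thus outside case~1 we conclude $\ell\reachOrder s$, and the dual argument (equality in a convex combination from above) yields $s\reachOrder u$ for $u\in\UB(\Succ(s))$.

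The only genuine obstacle is the strictness step: the naive chain $\sol[\ell]{T}{\pdtmc}\le\sol[s]{T}{\pdtmc}\le\sol[u]{T}{\pdtmc}$ only delivers the non-strict order. The leverage point is that equality in a convex combination whose weights are all strictly positive forces termwise equality of the combined values; this cascades $\ell\equiv s$ into $\Succ(s)\subseteq[s]$, collapsing the putative counterexample into case~1. This is precisely where graph-preservingness of $R$ is used essentially.
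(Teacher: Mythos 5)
Your proof is correct and takes essentially the same route as the paper's: both rest on the Bellman identity expressing $\sol[s]{T}{\pdtmc}(\vec{u})$ as a convex combination of the successor values with strictly positive weights (this is where graph-preservingness enters), so that equality anywhere in the chain $\sol[\ell]{T}{\pdtmc}\le\sol[s]{T}{\pdtmc}\le\sol[u]{T}{\pdtmc}$ forces all successor values to coincide and collapses into the case $\Succ(s)\subseteq[s]$. The one caveat is that your strictness step negates $\ell\reachOrder s$ as $\ell\equiv s$, which presupposes that $\reachOrder$ denotes the asymmetric part of $\reachOrderEq$ rather than pointwise strict inequality of the solution functions everywhere on $R$; the paper's own argument relies on the same reading, so this is not a gap relative to the paper.
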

The first case essentially says that if $\exists s' \in \Succ(s)$ with $\Succ(s) \subseteq [s']$, then also $s \in [s']$.
Lem.~\ref{lem:reachrelativesucc} enables to construct reachability orders:
\begin{example}
Reconsider the pMC $\pdtmc_1$ from Fig.~\ref{fig:nonmon}.
Clearly $\bad \reachOrder \good$.
Now consider the pMC in reverse topological order (from back to front).
We start with state $s_2$. 
By Lem.~\ref{lem:reachrelativesucc}, we conclude $\bad \reachOrder s_2 \reachOrder \good$. 
Next, we consider $s_1$, and analogously conclude $s_2 \reachOrder s_1 \reachOrder \good$. 
Finally, considering $s_0$ gives $\Succ(s_0) \subseteq [s_1]$ thus, $s_0 \in [s_1]$. 
The resulting (exhaustive) reachability order is given in Fig.~\ref{fig:rograph_small_pmc_exhaustive}.
\end{example}

This reasoning is automated by algorithm Alg.~\ref{alg:latticeConstruction}.
It takes as input an acyclic pMC and iteratively computes a set of reachability orders, starting from the trivial order $\bad \reachOrder \good$.
In fact, it computes annotated orders $(\mathcal{A}, \reachOrderEq^\mathcal{A})$ where $\mathcal{A}$ is a set of assumptions of the form $s \reachOrderEq s'$.
At this stage, the assumptions are not relevant and not used; they become relevant in Sect.~\ref{subsec:makingassumptions}.
The algorithm uses a \textsf{Queue} storing triples consisting of 1) annotations, 2) the order so far, and 3) the remaining states to be processed. 
The queue is initialised (l.~\ref{alg:line:latticeConstruction:initial}) with no annotations, the order $\bad \reachOrder \good$, and the remaining states.
In each iteration, an order is taken from the queue.
If all states are processed, then the order is completed (l.~\ref{alg:line:latticeConstruction:finished}).
Otherwise, some state $s$ is selected (l.~\ref{alg:line:latticeConstruction:selectstate}) to process, and after a possible extension, the queue is updated with the extended order  (l.~\ref{alg:line:latticeConstruction:queueupdate}).
The states are selected in reverse topological order.
Thus, when considering state $s$, all states in $\Succ(s)$ have been considered before.
Using Lem.~\ref{lem:reachrelativesucc}, either $s$ belongs to an already existing equivalence class (l.~\ref{alg:line:latticeConstruction:existing}), or it can be added between some other states (l.~\ref{alg:line:latticeConstruction:addBetween}).
In both cases, the RO-graph of the order $\reachOrderEq$ is extended (where l.~\ref{alg:line:latticeConstruction:existing} uses the extension of $\reachOrderEq$ to equivalence classes).
As assumptions are not used, Alg.~\ref{alg:latticeConstruction} in fact computes a single reachability order; it runs linear in the number of transitions.
\begin{algorithm}[t]
	\caption{Construction of an \rograph}
	\label{alg:latticeConstruction}
	\begin{algorithmic}[1]
	
		\REQUIRE Acyclic pMC $\pDtmcInit$
		\ENSURE Result = a set of annotated orders $\reachOrderEq^\mathcal{A}$ (represented as their RO-graph)
		\STATE Result $\gets$ $\emptyset$, Queue $\gets$ $\left(\mathcal{A} : \emptyset, \reachOrder \, : \{ (\bad,\good) \}, S' : S \setminus \{ \good, \bad \}\right)$\label{alg:line:latticeConstruction:initial}
			\WHILE {Queue not empty}
				\STATE $\mathcal{A}, \reachOrderEqAssumption, S'$ $\gets$ Queue.pop()
				\IF{$S' = \emptyset$}
					\STATE Result $\gets$ Result $\cup$ $\{ (\mathcal{A}, \reachOrderEqAssumption) \}$.\label{alg:line:latticeConstruction:finished}
				\ELSE
				\STATE select $s \in S'$ with $s$ topologically last\label{alg:line:latticeConstruction:selectstate}
				\IF{$\exists s' \in \Succ(s)$ s.t.\  $\Succ(s) \subseteq [s']$}
				\label{alg:line:latticeConstruction:beginIf}
				\STATE extend RO-graph($\reachOrderEqAssumption$) with:  $s \equiv \Succ(s)$ \label{alg:line:latticeConstruction:existing}
				\ELSE
				\STATE extend RO-graph($\reachOrderEqAssumption$) with all: \\ $s \reachOrderAssumption \min\UB(\Succ(s))$ and $\max\LB(\Succ(s)) \reachOrderAssumption s$ \label{alg:line:latticeConstruction:addBetween}
				\ENDIF \label{alg:line:latticeConstruction:endIf}
				\STATE Queue.push($\mathcal{A}, \reachOrderEqAssumption, S' \setminus \{ s \}$)\label{alg:line:latticeConstruction:queueupdate}
				\ENDIF
							\ENDWHILE\label{alg:line:latticeConstruction:endFor}
		\RETURN Result
	\end{algorithmic}
\end{algorithm}
\begin{lemma}
	Algorithm~\ref{alg:latticeConstruction} returns a set with one reachability order.
\end{lemma}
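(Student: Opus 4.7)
The plan has two parts: first show that the algorithm produces a single annotated order, and then show that this order is indeed a reachability order.

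For the cardinality claim, I would proceed by a simple loop invariant. Initially the queue contains exactly one triple, and the set Result is empty. In each iteration exactly one triple is popped, and either (i) it is added to Result (when $S' = \emptyset$) with nothing pushed back, or (ii) after extending the order, exactly one successor triple is pushed back on line~\ref{alg:line:latticeConstruction:queueupdate}. Since no line in the main body ever makes a case split on assumptions at this stage, no branching occurs. Hence the invariant ``$|\text{Queue}| + |\text{Result}| = 1$'' is maintained throughout. Termination is immediate: every iteration that does not terminate strictly decreases $|S'|$ by one (the selected $s$ is removed), so after at most $|S|-2$ steps we reach $S' = \emptyset$ and the (unique) triple is transferred to Result.

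For correctness, let $\reachOrderEqAssumption$ denote the relation stored when the loop terminates and let $s_1,\dots,s_m$ be the states in the order they were selected (i.e., reverse topological order). I would prove by induction on $i$ that after processing $s_i$, the relation restricted to $\{\bad,\good\} \cup \{s_1,\dots,s_i\}$ is a reachability order with respect to $T$ and $R$. The base case is the initial edge $\bad \reachOrder \good$, which holds because $\sol[\bad]{T}{\pdtmc} = 0 \leq 1 = \sol[\good]{T}{\pdtmc}$. For the inductive step, since $s_i$ is topologically last among the remaining states (acyclicity is essential here), every state in $\Succ(s_i)$ has already been placed. Lemma~\ref{lem:reachrelativesucc} then yields the dichotomy: either $\Succ(s_i) \subseteq [s']$ for some $s' \in \Succ(s_i)$---handled on line~\ref{alg:line:latticeConstruction:existing} by placing $s_i$ in that equivalence class---or $\max\LB(\Succ(s_i)) \reachOrder s_i \reachOrder \min\UB(\Succ(s_i))$, which is exactly the extension made on line~\ref{alg:line:latticeConstruction:addBetween}. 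In both cases the new relation remains a pre-order satisfying the reachability-order inequality for $s_i$ against every previously placed state, because the extension only asserts facts whose validity follows directly from Lemma~\ref{lem:reachrelativesucc}.

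The main obstacle I anticipate is the second case: one must check that placing $s_i$ between $\max\LB(\Succ(s_i))$ and $\min\UB(\Succ(s_i))$ does not collapse any incomparable pairs nor contradict the transitive closure of the previously built order. This reduces to verifying that $\max\LB(\Succ(s_i))$ and $\min\UB(\Succ(s_i))$ are well-defined in the RO-graph (guaranteed because $\bad$ and $\good$ act as global bottom/top and were inserted at initialisation) and that the new edges are consistent with the existing Hasse diagram---which is immediate since we only add edges from $s_i$ upward and into $s_i$ from below, both respecting the existing order. The rest of the proof is routine bookkeeping.
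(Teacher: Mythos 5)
Your proof is correct and follows essentially the same route as the paper, which justifies this lemma only informally in the text preceding it: since Alg.~\ref{alg:latticeConstruction} never branches (assumptions are unused), the queue always holds exactly one order, and each extension step is sound by Lem.~\ref{lem:reachrelativesucc} applied in reverse topological order. Your loop invariant and induction merely make that argument explicit.
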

Even if there exists a sufficient reachability order for region $R$, Alg.~\ref{alg:latticeConstruction} might not find such an order, as the algorithm does not take into account $R$ at all --- it is purely graph-based. 
Alg.~\ref{alg:latticeConstruction} does obtain a sufficient reachability order if for all (parametric) states $s\in S$, $\Succ(s)$ is totally ordered by the computed $\reachOrderEq$.

\subsection{Making and discharging assumptions}
\label{subsec:makingassumptions}

Next, we aim to locally refine our RO-graph to obtain sufficient reachability orders.
Therefore, we exploit the annotations (called \emph{assumptions}) that were ignored so far.
Recall from Def.~\ref{def:sufficientRO} that a reachability order is not sufficient at a parametric state $s$, if its successors $s_1$ and $s_2$, say, are not totally ordered. 
We identify these situations while considering $s$ in Alg.~\ref{alg:latticeConstruction}.
We then continue as if the ordering of $s_1$ and $s_2$ is known.
By considering \emph{all} possible orderings of $s_1$ and $s_2$, we remain sound.
The fact that parametric states typically have only two direct successors (as most pMCs are simple~\cite{DBLP:conf/atva/CubuktepeJJKT18, DBLP:conf/uai/Junges0WQWK018}) limits the number of orders.

\tikzstyle{snakeline} = [decorate, decoration={pre length=0.1cm,
                         post length=0.1cm, snake, amplitude=.4mm,
                         segment length=1mm}, gray, ->]

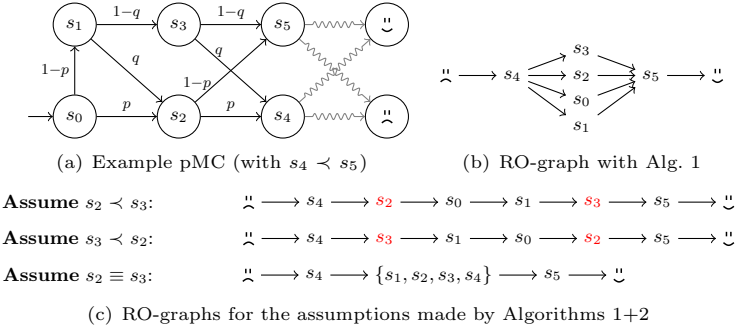
\begin{figure}[t]
\centering
\subfigure[Example pMC (with $s_4 \reachOrder s_5$)]{
\begin{tikzpicture}[scale=0.8, every node/.style={scale=0.8}, node distance=0.8cm]
	\node[state, initial, initial text=] (s0) {$s_0$};
	
	\node[state, above=of s0, yshift=-0.2cm] (s1) {$s_1$};
	
	\node[state, right=of s0] (s2) {$s_2$};
	\node[state, above=of s2, yshift=-0.2cm] (s3) {$s_3$};
	\node[state, right=of s2] (s4) {$s_4$};
	\node[state, right=of s3] (s5) {$s_5$};
	
	\node[state, right=of s5] (g) {$\good$};
	\node[state, right=of s4] (b) {$\bad$};
	
	\draw[->] (s0) edge node[auto, scale=0.8] {$1{-}p$} (s1);
	\draw[->] (s0) edge node[auto, scale=0.8] {$p$} (s2);
	\draw[->] (s1) edge node[auto, scale=0.8] {$q$} (s2);
	\draw[->] (s1) edge node[auto, scale=0.8] {$1{-}q$} (s3);
	\draw[->] (s2) edge node[auto, scale=0.8] {$p$} (s4);
	\draw[->] (s2) edge node[left,pos=0.3, scale=0.8] {$1{-}p$} (s5);
	\draw[->] (s3) edge node[right,pos=0.2, scale=0.8] {$q$} (s4);
	\draw[->] (s3) edge node[auto, scale=0.8] {$1{-}q$} (s5);
	
	\draw[snakeline] (s4) -- (g);
	\draw[snakeline] (s4) -- (b);
	\draw[snakeline] (s5) -- (g);
	\draw[snakeline] (s5) -- (b);
	
\end{tikzpicture}
\label{fig:reqassum:pmc}
}
\subfigure[RO-graph with Alg.~\ref{alg:latticeConstruction}]{
\begin{tikzpicture}[scale=0.8, every node/.style={scale=0.8}, node distance=0.3cm]
	\node (g) {$\good$};
	\node (good) [left=0.5cm of g] {$s_5$};
	\node (s2) [left=0.5cm of good] {$s_2$};
	\node (s3) [above=of s2,yshift=-0.35cm] {$s_3$};
	\node (s0) [below=of s2,yshift=0.35cm] {$s_0$};
	\node (s1) [below=of s0,yshift=0.35cm] {$s_1$};
	\node (bad) [left=0.5cm of s2] {$s_4$};
	\node (b) [left=0.5cm of bad] {$\bad$};
	
	\draw[<-] (g) -- (good);
	
	\draw[<-] (good) -- (s1);
	\draw[<-] (s1) -- (bad);
	\draw[<-] (good) -- (s2);
	\draw[<-] (s2) -- (bad);
	\draw[<-] (good) -- (s3);
	\draw[<-] (s3) -- (bad);
	\draw[<-] (good) -- (s0);
	\draw[<-] (s0) -- (bad);
	
	\draw[<-] (bad) -- (b);
	
	\end{tikzpicture}
\label{fig:reqassum:noassum}
}
\subfigure[RO-graphs for the assumptions made by Algorithms~\ref{alg:latticeConstruction}+\ref{alg:assumptions:naive}]{
\begin{tikzpicture}[scale=0.8, every node/.style={scale=0.8}, node distance=0.5cm]
	\node (gV1) {$\good$};
	\node (goodV1) [left=of gV1] {$s_5$};
	\node (s3V1) [left=of goodV1] {\color{red}{$s_3$}};
	\node (s1V1) [left=of s3V1] {$s_1$};
	\node (s0V1) [left=of s1V1] {$s_0$};
	\node (s2V1) [left=of s0V1] {\color{red}{$s_2$}};
	\node (badV1) [left=of s2V1] {$s_4$};
	\node (bV1) [left=of badV1] {$\bad$};	
	\node (a1V1) [left=1cm of bV1] {\textbf{Assume} $s_2 \reachOrder s_3$:};

	\draw[<-] (gV1) -- (goodV1);
	\draw[<-] (goodV1) -- (s3V1);
	\draw[<-] (s3V1) -- (s1V1);
	\draw[<-] (s1V1) -- (s0V1);
	\draw[<-] (s0V1) -- (s2V1);
	\draw[<-] (s2V1) -- (badV1);
	\draw[<-] (badV1) -- (bV1);

	\node (gV2) [below=of gV1, yshift=0.5cm] {$\good$};
	\node (goodV2) [left=of gV2] {$s_5$};
	\node (s3V2) [left=of goodV2] {\color{red}{$s_2$}};
	\node (s1V2) [left=of s3V2] {$s_0$};
	\node (s0V2) [left=of s1V2] {$s_1$};
	\node (s2V2) [left=of s0V2] {\color{red}{$s_3$}};
	\node (badV2) [left=of s2V2] {$s_4$};
	\node (bV2) [left=of badV2] {$\bad$};
	\node (a1V2) [left=1cm of bV2] {\textbf{Assume} $s_3 \reachOrder s_2$:};
	
	\draw[<-] (gV2) -- (goodV2);
	\draw[<-] (goodV2) -- (s3V2);
	\draw[<-] (s3V2) -- (s1V2);
	\draw[<-] (s1V2) -- (s0V2);
	\draw[<-] (s0V2) -- (s2V2);
	\draw[<-] (s2V2) -- (badV2);
	\draw[<-] (badV2) -- (bV2);

	\node (a1V3) [below=of a1V2, yshift=0.5cm] {\textbf{Assume} $s_2 \equiv s_3$:};
	\node (bV3) [right=1cm of a1V3] {$\bad$};
	\node (badV3) [right=of bV3] {$s_4$};
	\node (CV3) [right=of badV3] {$\{ s_1, s_2, s_3, s_4 \}$};
	\node (goodV3) [right=of CV3] {$s_5$};
	\node (gV3) [right=of goodV3] {$\good$};
		
	\draw[<-] (gV3) -- (goodV3);
	\draw[<-] (goodV3) -- (CV3);
	\draw[<-] (CV3) -- (badV3);
	\draw[<-] (badV3) -- (bV3);
\end{tikzpicture}
\label{fig:reqassum:withassum}
}
\vspace{-0.5em}
\caption{Illustrating the use of assumptions}
\end{figure}

\begin{example}
	Consider the pMC in Fig.~\ref{fig:reqassum:pmc}.
	 Assume that Alg.~\ref{alg:latticeConstruction} yields the RO-graph in Fig.~\ref{fig:reqassum:noassum}, in particular $s_4 \reachOrder s_5$.
	Alg.~\ref{alg:latticeConstruction} cannot order the successors of state $s_1$.
	But any region can be partitioned into three (potentially empty) subregions:
	A region with $s_2 \reachOrder s_3$, a region with $s_2 \equiv s_3$, and a region with $s_3 \reachOrder s_2$. 
	We below adapt Alg.~\ref{alg:latticeConstruction} such that, instead of adding $s_1$ between $s_4$ and $s_5$(l.\ \ref{alg:line:latticeConstruction:addBetween}), we create three copies of the reachability order.
	In the copy assuming $s_2 \reachOrder s_3$ we can order $s_1$ as in Fig.~\ref{fig:reqassum:withassum}.
	The other copies reflect $s_3 \reachOrder s_2$ and $s_2 \equiv s_3$, respectively.
\end{example}
Below, we formalise and automate this.
Let $\mathcal{A} = (\mathcal{A}_\prec, \mathcal{A}_\equiv)$ be a pair of sets of assumptions such that $(s,t) \in \mathcal{A}_\prec$ means $s \reachOrder t$ while $(s,t) \in \mathcal{A}_\equiv$ means $s \equiv t$.

\begin{definition}[Order with assumptions]
\label{def:orderwithassumptions}
Let $\reachOrderEq$ be a reachability order,
and $\mathcal{A} = (\mathcal{A}_\prec, \mathcal{A}_\equiv)$ a pair with \emph{assumptions} $ \mathcal{A}_\prec, \mathcal{A}_\equiv \subseteq S \times S$.
Then $(\reachOrderEq^{\mathcal{A}}, \mathcal{A})$ is called an \emph{order with assumptions} where
$
\reachOrderEq^{\mathcal{A}} \ = \ \bigl( \reachOrderEq \, \cup \, \mathcal{A}_\prec \, \cup \, \mathcal{A}_\equiv \bigr)^*.
$
\end{definition}
The next result asserts that the pre-order $\reachOrderEqAssumption$ is a reachability order if all assumptions conform to the ordering of the reachability probabilities.
\begin{lemma}
If assumptions $\mathcal{A} = (\mathcal{A}_\prec, \mathcal{A}_\equiv)$ satisfy:
\[
\begin{array}{rcl}
(s,t) \in \mathcal{A}_\prec & \quad \text{implies} \quad &
\forall \vec{u} \in R.~ \sol[s]{T}{\pdtmc}(\vec{u}) < \sol[t]{T}{\pdtmc}(\vec{u}), \text{and} \\[1ex]
(s,t) \in \mathcal{A}_\equiv & \quad \text{implies} \quad &
\forall \vec{u} \in R.~ \sol[s]{T}{\pdtmc}(\vec{u}) = \sol[t]{T}{\pdtmc}(\vec{u}),  \end{array}
\]
then $\reachOrderEqAssumption$ is a reachability order, and we call $\mathcal{A}$ \emph{(globally) valid}.
\end{lemma}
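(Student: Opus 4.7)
The plan is a straightforward structural induction on the construction of $\reachOrderEqAssumption$ as the reflexive--transitive closure of $\reachOrderEq \, \cup \, \mathcal{A}_\prec \, \cup \, \mathcal{A}_\equiv$. I would fix an arbitrary $\vec{u} \in R$ throughout and show that $s \reachOrderEqAssumption t$ implies $\sol[s]{T}{\pdtmc}(\vec{u}) \leq \sol[t]{T}{\pdtmc}(\vec{u})$; since $\vec{u}$ is arbitrary, the conclusion of Def.~\ref{def:exhaustiveReachOrder} follows and $\reachOrderEqAssumption$ is by definition a pre-order, making it a reachability order.

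First I would handle the base step: if $s \reachOrderEqAssumption t$ holds because of a single atomic link in $\reachOrderEq \, \cup \, \mathcal{A}_\prec \, \cup \, \mathcal{A}_\equiv$, then one of three cases applies. If $s \reachOrderEq t$, the required inequality is immediate because $\reachOrderEq$ is itself a reachability order by hypothesis. If $(s,t) \in \mathcal{A}_\prec$, then by validity $\sol[s]{T}{\pdtmc}(\vec{u}) < \sol[t]{T}{\pdtmc}(\vec{u})$, which in particular yields $\leq$. If $(s,t) \in \mathcal{A}_\equiv$, then by validity $\sol[s]{T}{\pdtmc}(\vec{u}) = \sol[t]{T}{\pdtmc}(\vec{u})$, again yielding $\leq$. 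The reflexive case $s = t$ is trivial since $\sol[s]{T}{\pdtmc}(\vec{u}) = \sol[s]{T}{\pdtmc}(\vec{u})$.

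Next I would handle the inductive step for the transitive closure: suppose $s \reachOrderEqAssumption t$ decomposes as $s \reachOrderEqAssumption r \reachOrderEqAssumption t$ with strictly shorter derivations. By induction, $\sol[s]{T}{\pdtmc}(\vec{u}) \leq \sol[r]{T}{\pdtmc}(\vec{u})$ and $\sol[r]{T}{\pdtmc}(\vec{u}) \leq \sol[t]{T}{\pdtmc}(\vec{u})$. Transitivity of $\leq$ on $[0,1]$ then gives the desired $\sol[s]{T}{\pdtmc}(\vec{u}) \leq \sol[t]{T}{\pdtmc}(\vec{u})$.

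There is no real obstacle: the argument only uses that $\leq$ is reflexive and transitive and that each of the three generating relations is already pointwise dominated by $\leq$ on the solution functions. The only subtle point worth stating explicitly is that the closure in Def.~\ref{def:orderwithassumptions} is the closure under both reflexivity and transitivity, so both parts above are needed; I would spell out the induction on the number of composition steps to make this fully rigorous.
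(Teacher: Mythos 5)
Your proof is correct; the paper states this lemma without proof (treating it as immediate), and your induction over the reflexive--transitive closure of $\reachOrderEq \, \cup \, \mathcal{A}_\prec \, \cup \, \mathcal{A}_\equiv$ --- base cases dominated pointwise by $\leq$ on the solution functions, inductive step by transitivity of $\leq$ --- is exactly the argument the authors leave implicit. No gaps.
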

Algorithm~\ref{alg:assumptions:naive} adds assumptions to the reachability order. 
It comes before Line~\ref{alg:line:latticeConstruction:addBetween} of Alg.~\ref{alg:latticeConstruction}. 
If the reachability order $\reachOrderEqAssumption$ contains two incomparable successors $s_1$ and $s_2$ of state $s$, we make three different assumptions: In particular, we assume either
$s_1 \reachOrderAssumption s_2$, $s_2 \reachOrderAssumption s_1$, or $s_1 \equiv^\mathcal{A} s_2$.
We then put the updated orders in the queue (without having processed state $s$).
As the states $s_1, s_2$ were incomparable, the assumptions are new and do not contradict with the order so far. 

\begin{algorithm}[t]
	\caption{Assumption extension (put before l.~\ref{alg:line:latticeConstruction:addBetween} in Alg.~\ref{alg:latticeConstruction}).}
	\label{alg:assumptions:naive}
	\begin{algorithmic}[1]
		\IF{$\reachOrderEqAssumption$ is not a total order for $\Succ(s)$}			
			\STATE pick $s_1, s_2 \in \Succ(s)$ s.t.\ neither $s_1 \reachOrderEqAssumption s_2$ nor $s_2 \reachOrderEqAssumption s_1$
			\STATE Queue.push($(\mathcal{A}_\prec \cup \{ (s_1,s_2) \}, \mathcal{A}_\equiv), \reachOrderEqAssumption\text{ extended with }s_1 \prec s_2, S'$) 
			\STATE
			Queue.push($(\mathcal{A}_\prec \cup \{ (s_2,s_1)  \}, \mathcal{A}_\equiv), \reachOrderEqAssumption\text{ extended with }s_1 \equiv s_2, S'$) 
			\STATE Queue.push($(\mathcal{A}_\prec, \mathcal{A}_\equiv \cup \{ (s_1,s_2) \}), \reachOrderEqAssumption\text{ extended with }s_2 \prec s_1, S'$) 
			\STATE \textbf{continue}
		\ENDIF
	\end{algorithmic}
\end{algorithm}

The algorithm does not remove states from the queue if their successors are not totally ordered. Consequently, we have:
\begin{theorem}
\label{theorem-correctness-algoneandtwo}
For every order with assumptions $(\reachOrderEqAssumption, \mathcal{A})$ computed by \linebreak Algorithm~\ref{alg:latticeConstruction}+\ref{alg:assumptions:naive}. Then: if $\reachOrderEqAssumption$ is a reachability order, then it is sufficient.
\end{theorem}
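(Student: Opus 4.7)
My plan is to argue that sufficiency (Def.~\ref{def:sufficientRO}) is a purely structural property of the returned relation, and that the combined algorithm only removes a state $s$ from the pending set $S'$ after having arranged for $\Succ(s)$ to be totally ordered. A monotonicity invariant on the construction then ensures this total ordering survives until termination.

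The first step I would take is to establish a growth invariant: along any single execution branch, the relation $\reachOrderEqAssumption$ stored in a child queue entry always extends (\ie\ contains) the relation of its parent. This is immediate by inspection: line~\ref{alg:line:latticeConstruction:existing} of Alg.~\ref{alg:latticeConstruction} only adds new $\equiv$-edges, line~\ref{alg:line:latticeConstruction:addBetween} only adds two new $\prec$-edges incident to $s$, and the three \textsf{Queue.push} statements of Alg.~\ref{alg:assumptions:naive} each extend $\reachOrderEqAssumption$ by one relation on an incomparable pair $s_1,s_2$. Consequently, any pair of states totally ordered at some point during the construction remains totally ordered in the final output of that branch.

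The key observation concerns the point of removal. A state $s$ is taken out of $S'$ only in line~\ref{alg:line:latticeConstruction:queueupdate} of Alg.~\ref{alg:latticeConstruction}, which is reached in two situations: either the \textsf{if}-branch at line~\ref{alg:line:latticeConstruction:beginIf} fires, in which case $\Succ(s) \subseteq [s']$ for some $s'$ and hence all successors are pairwise $\equiv$-related, and thus totally ordered; or the \textsf{else}-branch is entered, which with Alg.~\ref{alg:assumptions:naive} inserted only happens once the total-order test on $\Succ(s)$ has succeeded, since otherwise the guarded \textbf{continue} skips past line~\ref{alg:line:latticeConstruction:queueupdate} while pushing three refined copies with $S'$ unchanged (so $s$ is re-enqueued). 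Combined with the growth invariant, this shows that in the final $\reachOrderEqAssumption$ of any completed branch, $\Succ(s)$ is totally ordered for every state $s$, in particular for every parametric one, which is exactly Def.~\ref{def:sufficientRO}.

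The subtlety I expect to require most care is verifying that the three branches produced by Alg.~\ref{alg:assumptions:naive} really give rise to well-defined reflexive--transitive extensions in the sense of Def.~\ref{def:orderwithassumptions}; one must check that adding $s_1 \prec s_2$ (respectively $s_2 \prec s_1$, $s_1 \equiv s_2$) to $\reachOrderEqAssumption$ cannot, via transitive closure with already-present edges, collapse the relation or contradict an existing assumption. The hypothesis that $\reachOrderEqAssumption$ is a reachability order in the theorem absorbs this cleanly, since it guarantees the assumption set is semantically consistent; the structural sufficiency conclusion above is then immediate.
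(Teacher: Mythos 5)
Your argument is correct and follows essentially the same route as the paper, which justifies the theorem with the single observation that a state is only removed from the queue once its successors are totally ordered; your growth invariant and the case analysis on how line~\ref{alg:line:latticeConstruction:queueupdate} is reached simply make that one-liner rigorous. The only extra care you add (that new assumptions on an incomparable pair cannot contradict the existing relation, and that semantic consistency is absorbed by the hypothesis that $\reachOrderEqAssumption$ is a reachability order) matches the paper's own remark and is handled appropriately.
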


\ownsubsection{Discharging assumptions}
\label{sec:discharge}
Algorithm~\ref{alg:latticeConstruction}+\ref{alg:assumptions:naive} yields a set of orders. 
By Thm.~\ref{theorem-correctness-algoneandtwo}, each order $\reachOrder^\mathcal{A}$ is a (proper) reachability order if the assumptions in $\mathcal{A}$ are valid.

The following result states that the assumptions can sometimes be ignored.
\begin{theorem}
\label{thm:allordersmonotonic}
If all orders computed by Algorithm~\ref{alg:latticeConstruction}+\ref{alg:assumptions:naive} are witnesses for a parameter to be monotonic increasing (decreasing), then the parameter is indeed monotonic increasing (decreasing).
\end{theorem}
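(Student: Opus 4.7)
The plan is to reduce the global statement $\monIncr[p]{\pdtmc}$ to the pointwise inequality $\derivative{p}{\sol[\sinit]{T}{\pdtmc}}(\vec{u}) \ge 0$ for every $\vec{u} \in R$, which implies monotonicity by standard calculus since $\sol[\sinit]{T}{\pdtmc}$ is continuously differentiable on graph-preserving regions. Without loss of generality, assume that every order output by Algorithm~\ref{alg:latticeConstruction}+\ref{alg:assumptions:naive} is a witness for monotonic \emph{increasing} (the decreasing case is symmetric).

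Fix an arbitrary $\vec{u} \in R$. I would trace one branch of the algorithm's exploration tree that is consistent with $\vec{u}$, using the following observation: each time Algorithm~\ref{alg:assumptions:naive} encounters two incomparable successors $s_1, s_2$, the three pushed alternatives $s_1 \reachOrderAssumption s_2$, $s_1 \equiv^{\mathcal{A}} s_2$, $s_2 \reachOrderAssumption s_1$ exhaust the three possible orderings of the scalars $\sol[s_1]{T}{\pdtmc}(\vec{u})$ and $\sol[s_2]{T}{\pdtmc}(\vec{u})$. At each branching point I pick the alternative that matches the actual ordering at $\vec{u}$; by induction on the length of the trace, this yields an output $(\reachOrderEqAssumption, \mathcal{A})$ all of whose assumptions hold at $\vec{u}$.

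Consequently $\reachOrderEqAssumption$ faithfully reflects the ordering of reachability probabilities at this specific $\vec{u}$. By Thm.~\ref{theorem-correctness-algoneandtwo} it is also sufficient for $\pdtmc$, and by hypothesis it is a witness for monotonic increasing. Applying a pointwise version of Lem.~\ref{lem:localmonsucc3} at $\vec{u}$ then gives $\sum_{s' \in \Succ(s)} \derivative{p}{\probdtmc(s,s')}(\vec{u}) \cdot \sol[s']{T}{\pdtmc}(\vec{u}) \ge 0$ for every parametric state $s$, and a pointwise version of Thm.~\ref{thm:monIncr} propagates these local non-negativities up to $\derivative{p}{\sol[\sinit]{T}{\pdtmc}}(\vec{u}) \ge 0$. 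Since $\vec{u}$ was arbitrary, monotonicity on all of $R$ follows.

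The main obstacle is justifying the pointwise lifting. Definition~\ref{def:localMonIncr}, Lem.~\ref{lem:localmonsucc3}, and Thm.~\ref{thm:monIncr} are each phrased uniformly over $R$, whereas our argument applies them valuation by valuation because different $\vec{u} \in R$ may select different witnessing orders from the algorithm's output. What needs to be checked is that the proof of Thm.~\ref{thm:monIncr} — and in particular the fixed-point variant that handles cycles in Sect.~\ref{subsec:cycles} — only relies on the local derivative inequality at the valuation where the global derivative is being evaluated, so that the pointwise non-negativities supplied by the different branches can be glued together into global monotonicity on $R$.
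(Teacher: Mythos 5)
Your proof is correct and is essentially the paper's own argument made precise: the paper's (very short) justification partitions $R$ into subregions $R_{\mathcal{A}}$ on which each assumption set is valid and invokes the witness property on each piece, which is exactly your per-valuation branch selection followed by a pointwise application of Lem.~\ref{lem:localmonsucc3} and Thm.~\ref{thm:monIncr}. The lifting you flag at the end is benign, since Def.~\ref{def:localMonIncr} and the proofs of those results apply verbatim with $R$ replaced by any graph-preserving subregion (in particular $R_{\mathcal{A}}$ or a singleton $\{\vec{u}\}$), so no further check is needed.
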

This can be seen as follows.
Intuitively, a region $R$ is partitioned into (possibly empty) regions $R_\mathcal{A}$ for each possible set of assumptions $\mathcal{A}$. 
If on each region $R_\mathcal{A}$ the order $\reachOrderEqAssumption$ is a witness for monotonicity (and all witnesses agree on whether the parameter is $\monIncr{}$ or $\monDecr{}$), then the parameter is monotonic on $R$.

If Thm.~\ref{thm:allordersmonotonic} does not apply, we establish whether or not assumptions are valid on $R$ in an on-the-fly manner, as follows:
Let $(\reachOrderEqAssumption, \mathcal{A})$ be the current order, and suppose we want
to check whether $s_1 \reachOrder s_2$ is a new assumption.
If the outcome is $s_1 \reachOrder s_2$, then we extend the RO-graph with $s_1 \reachOrder s_2$, do not add this assumption, and ignore the possibilities $s_1 \equiv s_2$ and $s_2 \reachOrder s_1$.
If $s_1 \not\reachOrderEq s_2$, we do not assume $s_1 \reachOrder s_2$ (and ignore the corresponding order). 
Both cases prune the number of orders.
In case of an inconclusive result, $s_1 \reachOrder s_2$ is added to $\mathcal{A}_\prec$.

To check whether $s_1 \reachOrder s_2$ we describe three techniques.

\smallskip\noindent\emph{Using a local NLP.}
The idea is to locally (at $s_1$ and $s_2$) consider the pMC and its characterising non-linear program (NLP)~\cite{BK08,DBLP:conf/tacas/BartocciGKRS11,DBLP:conf/cav/DehnertJJCVBKA15,DBLP:conf/atva/CubuktepeJJKT18}, together with the inequalities encoded by $\reachOrderEqAssumption$.
To refute an assumption to be globally valid, a single instantiation $\vec{u}$ refuting the assumption suffices. 
This suggests to let a solver prove the absence of such an instantiation $\vec{u}$ by considering a fragment of the pMC (see Example~\ref{ex:NLP}, Appendix~\ref{app:completealg}). 
If successful, the assumption is globally valid. Otherwise, we don't know: the obtained instantiation $\vec{u}$ might be spurious.

\smallskip\noindent\emph{Using model checking.}
This approach targets to cheaply disprove assumptions.
We sample the parameter space at suitable points (as in e.g.\cite{DBLP:conf/tase/ChenHHKQ013,DBLP:journals/jss/CalinescuCGKP18}),
and reduce the amount of solver runs, similar to~\cite{DBLP:conf/cav/DehnertJJCVBKA15}.
In particular, we instantiate the pMC with instantiations $\vec{u}$ from a set $U$, and evaluate the (parameter-free) MC $\pdtmc[\vec{u}]$ via standard model checking. 
This sampling yields reachability probabilities $\reachPrT[s][{\pdtmc[\vec{u}]}]$ for every state $s$, 
and allows to disprove an assumption, say $s_1 \reachOrder s_2$, by merely looking up whether $\reachPrT[s_1][{\pdtmc[\vec{u}]}] \geq \reachPrT[s_2][{\pdtmc[\vec{u}]}]$ for some $u \in U$. 

\smallskip\noindent\emph{Using region checking.}
Region verification procedures (e.g. parameter lifting~\cite{DBLP:conf/atva/QuatmannD0JK16}) consider a region $R$, and obtain for each state $s$ an interval $[a_s,b_s]$ s.t. 
\linebreak $\reachPrT[s][{\pdtmc[\vec{u}]}] \in [a_s,b_s]$ for all $\vec{u} \in R$. 
Assumption $s_1 \reachOrder s_2$ can be proven by checking $b_{s_1} \leq a_{s_2}$.

\subsection{Treating cycles}
\label{subsec:cycles}
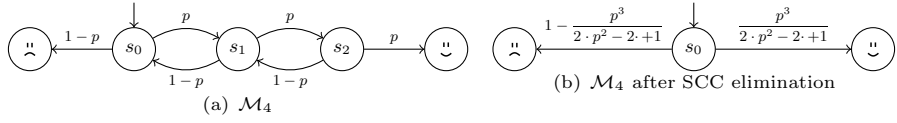
\begin{figure}[t]
	\centering
\subfigure[$\dtmc_4$]{
		\begin{tikzpicture}[scale=0.8, every node/.style={scale=0.8}, node distance=0.8cm, baseline]
		\node[state, initial above]           (q_0)                {$s_0$};
		\node[state]           (q_1) [right=of q_0] {$s_1$};
		\node[state]           (q_2) [right=of q_1] {$s_2$};
		
		\node[state]           (bad) [left=of q_0] {\bad};
		\node[state]           (good) [right=of q_2] {\good};
		
		\path[->] (q_0)   edge node [above,scale=0.8] {$1-p$}  (bad)
		(q_0)   edge [bend left] node [above,scale=0.8] {$p$}  (q_1)
		(q_1)   edge [bend left] node [above,scale=0.8] {$p$}  (q_2)
		(q_1)   edge [bend left]node [below,scale=0.8] {$1-p$}  (q_0)
		(q_2)   edge [bend left]node [below,scale=0.8] {$1-p$}  (q_1)
		(q_2)   edge node [above,scale=0.8] {$p$}  (good);
		\end{tikzpicture}
	\label{fig:scc}
}
\subfigure[$\dtmc_4$ after SCC elimination]{
	\begin{tikzpicture}[scale=0.8, every node/.style={scale=0.8}, node distance=1.8cm, baseline]
	\node[state, initial above]           (q_0)                {$s_0$};
	
	\node[state]           (bad) [left=of q_0] {\bad};
	\node[state]           (good) [right=of q_0] {\good};
	
	\path[->] (q_0)   edge node [above,scale=0.8] {$1-\dfrac{p^3}{2\cdot p^2 -2\cdotp+1}$}  (bad)
	(q_0)   edge [] node [above,scale=0.8] {$\dfrac{p^3}{2\cdot p^2 -2\cdotp+1}$}  (good);
	\end{tikzpicture}
	\label{fig:scc-elimination}
}

\caption{An example pMC consisting of a single SCC}

\end{figure}
So far, we considered acyclic pMCs. We use two techniques to treat cycles.

\smallskip\noindent\emph{SCC elimination}~\cite{DBLP:conf/qest/JansenCVWAKB14} contracts each SCC into a set of states, one for each entry state of the SCC. 
Fig.~\ref{fig:scc-elimination} shows the pMC of Fig.~\ref{fig:scc} after SCC elimination.

\smallskip\noindent\emph{Cycle-breaking.}
If SCC elimination is not viable, we use an alternative. 
The following analogue to Lem.~\ref{lem:reachrelativesucc} is insightful and tailored to simple pMCs.
\begin{restatable}{lemma}{lemForwardreasoning}
	\label{lem:forwardreasoning}
	For any state $s$ with $\Succ(s) = \{ s_1, s_2 \}$ the following holds:
	\begin{inparaenum}
	\item if $s_1 \equiv s$, then $s_2 \equiv s$,
	\item if $s_1 \reachOrder s$, then $s \reachOrder s_2$,
	\item if $s \reachOrder s_1$, then $s_2 \reachOrder s$.	
	\end{inparaenum}
\end{restatable}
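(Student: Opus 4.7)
The plan is to exploit the standard one-step unfolding of the solution function at $s$: since the pMC is simple and $\Succ(s) = \{s_1,s_2\}$, the probabilities $\probdtmc(s,s_1)$ and $\probdtmc(s,s_2)$ are of the form $p, 1{-}p$ (or both rational constants summing to one), so for every instantiation $\vec{u} \in R$,
\[ \sol[s]{T}{\pdtmc}(\vec{u}) \;=\; \probdtmc(s,s_1)(\vec{u}) \cdot \sol[s_1]{T}{\pdtmc}(\vec{u}) \;+\; \probdtmc(s,s_2)(\vec{u}) \cdot \sol[s_2]{T}{\pdtmc}(\vec{u}).\]
The region $R$ is graph-preserving, hence $\probdtmc(s,s_i)(\vec{u}) \neq 0$ for $i \in \{1,2\}$, which for simple pMCs forces both weights to lie in the open interval $(0,1)$. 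Thus $\sol[s]{T}{\pdtmc}(\vec{u})$ is a strict convex combination of $\sol[s_1]{T}{\pdtmc}(\vec{u})$ and $\sol[s_2]{T}{\pdtmc}(\vec{u})$ at every point of $R$. Write $\alpha(\vec{u}) \colonequals \probdtmc(s,s_1)(\vec{u})$ so that $1 - \alpha(\vec{u}) \in (0,1)$.

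For claim (1), substitute $\sol[s_1]{T}{\pdtmc}(\vec{u}) = \sol[s]{T}{\pdtmc}(\vec{u})$ on the right-hand side of the unfolding and rearrange to obtain $(1 - \alpha(\vec{u})) \cdot \bigl( \sol[s_1]{T}{\pdtmc}(\vec{u}) - \sol[s_2]{T}{\pdtmc}(\vec{u}) \bigr) = 0$ for all $\vec{u} \in R$. Since $1 - \alpha(\vec{u}) > 0$, this forces $\sol[s_1]{T}{\pdtmc} = \sol[s_2]{T}{\pdtmc}$ on $R$, hence $s_2 \equiv s_1 \equiv s$.

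For claim (2), $s_1 \reachOrder s$ means $\sol[s_1]{T}{\pdtmc}(\vec{u}) < \sol[s]{T}{\pdtmc}(\vec{u})$ for every $\vec{u} \in R$. Substituting the unfolding yields $\sol[s_1]{T}{\pdtmc}(\vec{u}) < \alpha(\vec{u}) \cdot \sol[s_1]{T}{\pdtmc}(\vec{u}) + (1-\alpha(\vec{u})) \cdot \sol[s_2]{T}{\pdtmc}(\vec{u})$, which after cancellation and division by the strictly positive $1 - \alpha(\vec{u})$ gives $\sol[s_1]{T}{\pdtmc}(\vec{u}) < \sol[s_2]{T}{\pdtmc}(\vec{u})$. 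Plugging this back into the convex combination yields $\sol[s]{T}{\pdtmc}(\vec{u}) < \alpha(\vec{u}) \cdot \sol[s_2]{T}{\pdtmc}(\vec{u}) + (1-\alpha(\vec{u})) \cdot \sol[s_2]{T}{\pdtmc}(\vec{u}) = \sol[s_2]{T}{\pdtmc}(\vec{u})$, i.e.\ $s \reachOrder s_2$. Claim (3) is obtained by the symmetric argument with the strict inequality reversed, or equivalently by swapping the roles of $s_1$ and $s_2$.

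The only subtle point is justifying that both coefficients are strictly positive pointwise on $R$; this is where simplicity of the pMC (only $p$, $1{-}p$, or rationals occur as transition weights) and graph-preservingness of $R$ are essential. Without the graph-preserving assumption, one could have $\alpha(\vec{u}) \in \{0,1\}$ at some points, breaking the strict convex-combination argument used in claims (1)--(3). Beyond this, the argument is a routine rearrangement of the one-step equation.
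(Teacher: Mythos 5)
Your proof is correct and follows essentially the same route as the paper's: unfold $\sol[s]{T}{\pdtmc}$ one step as a convex combination of $\sol[s_1]{T}{\pdtmc}$ and $\sol[s_2]{T}{\pdtmc}$, substitute the hypothesis, and cancel the strictly positive coefficient $1-\probdtmc(s,s_1)(\vec{u})$ guaranteed by graph-preservation. (One small remark: simplicity of the pMC is not actually needed — well-definedness gives $\probdtmc(s,s_2) = 1 - \probdtmc(s,s_1)$ and graph-preservation forces both weights into $(0,1)$ for any pMC with exactly these two successors.)
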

\noindent This suggests to take state $s$ on a cycle and insert it into the RO-graph computed so far, which is always (trivially) possible, and then adding further states using Lem.~\ref{lem:forwardreasoning}.
We illustrate this idea by an example.
\begin{example}
Reconsider Fig.~\ref{fig:scc}.
Lem.~\ref{lem:reachrelativesucc} does not give rise to extending the trivial order $\bad \reachOrder \good$.
To treat the cycle, one of the states $s_0$, $s_1$ or $s_2$ is to be added.
Selecting $s_0$ yields (as for any other state) $\bad \reachOrder s_0 \reachOrder \good$.
To order  $s_1$ or $s_2$, Lem.~\ref{lem:reachrelativesucc} is (still) not applicable.
Using $\bad \reachOrder s_0$, Lem.~\ref{lem:forwardreasoning} applied to $s_1$ yields $s_0 \reachOrder s_1$.
For $s_2$, we obtain $s_1 \reachOrder s_2$ in a similar way.
\end{example}
The strategy is thus to successively pick states from a cycle,  insert them into the order $\reachOrderEq$ so far, and continue this procedure until all states on the cycle are covered (by either Lem.~\ref{lem:reachrelativesucc} or \ref{lem:forwardreasoning}).
The extension to Alg.~\ref{alg:latticeConstruction}+\ref{alg:assumptions:naive} is given in Appendix~\ref{app:completealg}.
We emphasise that Theorems~\ref{theorem-correctness-algoneandtwo} and \ref{thm:allordersmonotonic} also apply to this extension.

It remains to discuss: how to decide which states to select on a cycle? 
This is done heuristically. 
A good heuristic selects states that probably lead to cycle ``breaking''. 
The essential criteria that we empirically determined are: take cycles in SCCs that are at the front of the reverse topological ordering of SCCs, and prefer states with successors outside the SCC (as in the above example).

\section{Experimental Evaluation}
We realised a prototype of the algorithm from Sect.~\ref{sec:automatic} on top of \storm~(v1.3)~\cite{DBLP:conf/cav/DehnertJK017} and evaluated two questions.
To that end, we took \emph{all} ten benchmarks sets with pMCs and non-trivial reachability properties from the \param website~\cite{paramwebsite}, and from~\cite{DBLP:conf/tacas/HartmannsKPQR19}, and~\cite{DBLP:journals/iandc/Chatzieleftheriou18}. 
The benchmark sets \textsf{egl}\cite{DBLP:conf/qest/KwiatkowskaNP12}, \textsf{craps}\cite{BK08}, \textsf{nand}\cite{DBLP:journals/tcad/NormanPKS05} and \textsf{herman}~\cite{DBLP:journals/ipl/Herman90,DBLP:journals/fac/KwiatkowskaNP12} are not monotonic. 
Their non-monotonicity can be shown by uniformly taking 100 samples on the parameter space. 
The benchmark \textsf{haddad-monmege}~\cite{DBLP:journals/tcs/HaddadM18} contains only a single non-sink state after preprocessing, it is trivially monotonic.
All experiments ran on a MacBook ME867LL/A. 
We use a 12 GB memory-out (MO), and a 4h time-out (TO).

\ownsubsection{Can the algorithm determine monotonicity on the benchmarks?}
\label{sec:exp:aut}
We consider the performance of the proposed algorithm.
First and foremost, for all six benchmark sets with monotonic parameters, the algorithm automatically and without user interference determines monotonicity.

\begin{table}[t]
\centering
	\scriptsize
	\caption{Automatically inferring monotonicity}
	\vspace{-0.5em}
	\scalebox{0.92}{
	\begin{tabular}{cc|ccrr||c||r|r|r}
		\hline
		benchmark & instance & A/C & $|V|$ & \#states & \#trans & monotonic &  \shortstack{model\\ building} &  \shortstack{mon.\\check} &  \shortstack{sol.\\func.} \\
		\hline
		\hline
		\multirow{3}{*}{\shortstack{\textsf{brp}\\\cite{DBLP:conf/papm/DArgenioJJL01}}} & (2,16) & \multirow{3}{*}{A} & \multirow{3}{*}{2} & 613 & 803 & \multirow{3}{*}{$\monDecrNoR[pK]{}$, $\monDecrNoR[pL]{}$}  & $<1$ & $\mathbf{<1}$ &  $\mathbf{<1}$   \\
		& (10,2048) &  &  & 45059 & 90115 & & 6 & \textbf{1} & \textbf{MO}\\
		& (15,4096) &  &  & 131075 & 262147 &  & 16 & \textbf{13}& \textbf{MO}\\
		\hline
		\multirow{3}{*}{\shortstack{\textsf{crowds}\\\cite{DBLP:journals/tissec/ReiterR98}}} 
		& (5,6) & \multirow{3}{*}{C} & \multirow{3}{*}{2} & 18817 & 32677 & \multirow{3}{*}{$\monIncrNoR[badC]{}$, $\monIncrNoR[pF]{}$} & $<1$ & \textbf{1} & $\mathbf{<1}$ \\
		& (10,6) &  &  & 352535 & 722015 &  & 6 &\textbf{1}& $\mathbf{<1}$ \\
		& (20,6) &  &  & 10633591 & 27151191 &  & 232 &\textbf{1} &  $\mathbf{<1}$ \\
		\hline
		\multirow{3}{*}{\shortstack{\textsf{gambler}\\\cite{DBLP:journals/iandc/Chatzieleftheriou18}}} 
		& (14800,1480) & \multirow{3}{*}{C} & \multirow{3}{*}{1} & 16281 & 32560 & \multirow{3}{*}{$\monIncrNoR[p]{}$} & $<1$& \textbf{1} &\textbf{TO}\\
		& (29600,2960) &  &  & 32561 & 65120 & & $<1$& \textbf{6} & \textbf{TO}  \\
		& (59200,5920) &  &  & 65121 & 130240 & & 2 & \textbf{21} &\textbf{TO} \\
		\hline
\multirow{3}{*}{\shortstack{\textsf{mes. auth.}\\\cite{DBLP:conf/icse/FilieriGT11}}}
		& (3840) & \multirow{3}{*}{A} & \multirow{3}{*}{2} & 19201 & 30720 &\multirow{3}{*}{$\monIncrNoR[p]{}$, $\monIncrNoR[q]{}$} & 3 & $\mathbf{<1}$ & $\mathbf{<1}$  \\
		& (7680) &  &  & 38401 & 61440 &  & 4 & $\mathbf{<1}$  & $\mathbf{<1}$ \\
		& (15360) &  &  & 76801 & 122880 & & 4 & $\mathbf{<1}$  & $\mathbf{<1}$ \\
		\hline
		\multirow{3}{*}{\shortstack{\textsf{zeroconf}\\\cite{DBLP:conf/dsn/BohnenkampSHV03}}}
		& (6400) & \multirow{3}{*}{C} & \multirow{3}{*}{2} & 6404 & 12805 & \multirow{3}{*}{$\monIncrNoR[p]{}$, $\monIncrNoR[q]{}$}  & $<1$ &$\mathbf{<1}$ &\textbf{1090} \\
		& (25600) &  &  & 25604 & 51205 & & $<1$ &$\mathbf{<1}$  &  \textbf{TO} \\
		& (102400) &  &  & 102404 & 204805 & & 3 & \textbf{3} & \textbf{TO} \\
		\hline
	\end{tabular}
	}
	\label{tab:aut}
	
\end{table}

Table~\ref{tab:aut} presents details: it lists the benchmark and their instances. We then list
whether the pMC is acyclic (A) or cyclic (C), the number $|V|$ of parameters, and the size of the pMC.
The column \emph{monotonic} gives the obtained results for the pMC parameters. 
\emph{Model building} includes the time for construction, default preprocessing by \storm, and bisimulation minimisation.
\emph{Mon.\ check} shows timings for inferring monotonicity from the built model.
To place these numbers in perspective, column \emph{sol.\ func} shows the time to obtain the solution function from the built model by \storm (default settings, same preprocessing, based on the implementation in \cite{DBLP:conf/cav/DehnertJJCVBKA15}). 
These times are a lower bound on the time to show monotonicity via the solution function. 
Timings for taking the derivative and analysing this derivative via an SMT solver are omitted (but significant).

The proposed method quickly determines monotonicity. 
For (only) \textsf{crowds}, the method applies an essential SCC elimination on the various smaller SCCs. 
For the available benchmarks, the method computes a single reachability order.
Unsurprisingly, the method is orders of magnitude faster and scales better than obtaining monotonicity from the solution function. 
Naturally, our algorithm cannot establish monotonicity on all cases.
The algorithm has difficulties handling subregions on non-monotonic benchmarks.
Take \textsf{Herman}: The solution function has (up to) three local extrema~\cite{DBLP:journals/fac/KwiatkowskaNP12}. 
On subregions, however, the graph structure easily induces inconclusive orders. 
A tighter integration with region verification, partially applied state elimination, or using a notion of multi-step local monotonicity (used in the proof of Thm.~\ref{thm:monIncr}) are avenues for improvement.

\ownsubsection{Does monotonicity allow for faster parameter synthesis?}
We consider three variants of parameter synthesis in the presence of monotonicity: 

\smallskip
\noindent\emph{Feasibility:} i.e., \emph{is there an instantiation for which a specification $\varphi$ is satisfied?} becomes mostly trivial in the presence of monotonicity. For simple pMCs, a single parameter-free MC evaluation suffices, which is clearly superior to other---typically sampling-based---approaches~\cite{DBLP:conf/tase/ChenHHKQ013,DBLP:conf/atva/CubuktepeJJKT18}.

\smallskip\noindent\emph{Region verification:} i.e., \emph{do all parameter values within a region satisfy $\varphi$?} is similarly trivialised for regions given as linear polyhedra. 
For our benchmarks, PLA~\cite{DBLP:conf/atva/QuatmannD0JK16}---approximating region verification by MDP model checking---is very competitive. 
In particular, PLA does not over-approximate on locally monotonic pMCs, and needs no refinements (the reverse does not hold: even for tight bounds, one cannot infer monotonicity with PLA).
Thus, whereas sampling checks a single MC, PLA checks one MDP. 
Typically, the MC can be checked $\sim$20\% faster. 

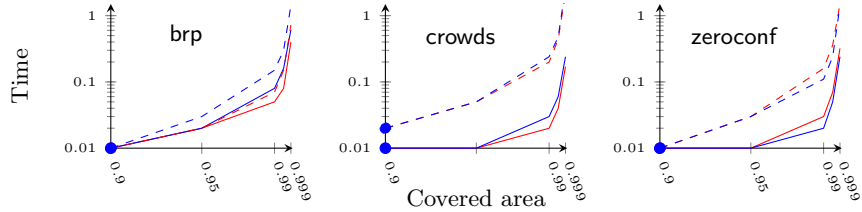
\begin{figure}[t]
	\centering
		\begin{tikzpicture}[baseline]
				\node (a) at (1,1.5) { \textsf{brp} };
		\begin{axis}[
		width=4cm, height=3.5cm, ymin=0, xmin=0.9, ymax=1.5, xmax=1, ymode=log,
				axis x line=bottom, axis y line=left,
		x label style={at={(axis description cs:0.5,0.05)},anchor=north},
		y label style={at={(axis description cs:0.1,0.5)},anchor=south},
		xtick={0.9, 0.95, 0.99, 0.999},
		xticklabels={0.9, 0.95, 0.99, 0.999},
		ytick={0.01, 0.1, 1},
		yticklabels={0.01, 0.1, 1},
		ylabel=Time,
		yticklabel style={font=\tiny}, xticklabel style={rotate=290, anchor=west,  font=\tiny},
		mark repeat={1000},
		legend pos=south east, legend style={font=\tiny}]
		\addplot[color=red,mark=x]
		table [col sep=semicolon, y=mon1, x=coverage]
		{plots/brp.csv};
		\addplot[color=blue,mark=*]
		table [col sep=semicolon, y=mon2, x=coverage]
		{plots/brp.csv};
		\addplot[color=red,mark=x,style=dashed]
		table [col sep=semicolon, y=pla1, x=coverage]
		{plots/brp.csv};
		\addplot[color=blue,mark=*, style=dashed]
		table [col sep=semicolon, y=pla2, x=coverage]
		{plots/brp.csv};
		\end{axis}
		\
		\end{tikzpicture}
		\label{fig:plot:brp}
		\begin{tikzpicture}[baseline]
						\node (a) at (1,1.5) { \textsf{crowds} };
		\begin{axis}[
		width=4cm, height=3.5cm, ymin=0, xmin=0.9, ymax=1.5, xmax=1, ymode=log,
				axis x line=bottom, axis y line=left,
		x label style={at={(axis description cs:0.5,0.05)},anchor=north},
		y label style={at={(axis description cs:0.05,0.5)},anchor=south},
		xtick={0.9, 0.95, 0.99, 0.999},
		xticklabels={0.9, , 0.99, 0.999},
		ytick={0.01, 0.1, 1},
		yticklabels={0.01, 0.1, 1},
		 xlabel=Covered area,
		yticklabel style={font=\tiny}, xticklabel style={rotate=290, anchor=west, font=\tiny},
		mark repeat={1000},
		legend pos=south east, legend style={font=\tiny}]
		\addplot[color=red,mark=x]
		table [col sep=semicolon, y=mon1, x=coverage]
		{plots/crowds.csv};
		\addplot[color=blue,mark=*]
		table [col sep=semicolon, y=mon2, x=coverage]
		{plots/crowds.csv};
		\addplot[color=red,mark=x,style=dashed]
		table [col sep=semicolon, y=pla1, x=coverage]
		{plots/crowds.csv};
		\addplot[color=blue,mark=*, style=dashed]
		table [col sep=semicolon, y=pla2, x=coverage]
		{plots/crowds.csv};
		\end{axis}
		\end{tikzpicture}
		\label{fig:plot:crowds}
		\begin{tikzpicture}[baseline]
						\node (a) at (1,1.5) { \textsf{zeroconf} };
		\begin{axis}[
		width=4cm, height=3.5cm, ymin=0, xmin=0.9, ymax=1.5, xmax=1, ymode=log,
				axis x line=bottom, axis y line=left,
		x label style={at={(axis description cs:0.5,0.05)},anchor=north},
		y label style={at={(axis description cs:0.05,0.5)},anchor=south},
		xtick={0.9, 0.95, 0.99, 0.999},
		xticklabels={0.9, 0.95, 0.99, 0.999},
		ytick={0.01, 0.1, 1},
		yticklabels={0.01, 0.1, 1},
		yticklabel style={font=\tiny}, xticklabel style={rotate=290, anchor=west, font=\tiny},
		mark repeat={1000},
		legend pos=south east, legend style={font=\tiny}]
		\addplot[color=red,mark=x]
		table [col sep=semicolon, y=mon1, x=coverage]
		{plots/zeroconf.csv};
		\addplot[color=blue,mark=*]
		table [col sep=semicolon, y=mon2, x=coverage]
		{plots/zeroconf.csv};
		\addplot[color=red,mark=x,style=dashed]
		table [col sep=semicolon, y=pla1, x=coverage]
		{plots/zeroconf.csv};
		\addplot[color=blue,mark=*, style=dashed]
		table [col sep=semicolon, y=pla2, x=coverage]
		{plots/zeroconf.csv};
		\end{axis}
		\end{tikzpicture}
		\label{fig:plot:zeroconf}
	\caption{\emph{What coverage (x-axis) in how much time (y-axis)}? when using PLA (dotted) or Monotonicity/Sampling (solid) 
		. The two colours indicate two different thresholds. }
	\label{fig:plot}
\end{figure}
\smallskip\noindent\emph{Parameter space partitioning:}
this procedure, implemented in \prophesy, \storm, and \param, \emph{iteratively divides a region into subregions that satisfy $\varphi$ or $\neg\varphi$, respectively}. 
We implemented an alternative prototype based on sampling and exploiting monotonicity: therefore, region splits can be taken much more informed. 
We compared to \storm (using PLA). 
Fig.~\ref{fig:plot} (log-log scale) displays cumulative model-checking runtimes to achieve a given coverage. 
Obtaining a coverage up to 90\% is trivial.
For higher coverage, our method is (on \textsf{crowds} and \textsf{zeroconf}) up to an order of magnitude faster, due to less model-checking calls.
This trend is independent of the threshold.
We see some room for improvement by a more sophisticated selection of samples, and by speeding up the sampling~\cite{DBLP:conf/atva/GainerHS18}.

\section{Related Work}

\noindent\emph{Monotonicity.}
Monotonicity in MCs goes back to Daley~\cite{Daley1968},
aiming to bound stationary probabilities of a stochastically monotone MC by another MC.
These stochastic orderings $\leq_{st}$ require ordered rows in the matrix $\mathcal{P}$, and are quite different from reachability orders. 
MCs can be compared if $\mathcal{P}$ is monotone w.r.t.\ $\leq_{st}$ for all probability vectors.
Such orderings have been used for multi-valued model checking of interval MCs~\cite{DBLP:conf/qest/HaddadP09}, but not applied to pMCs.

\smallskip\noindent\emph{Pre-orders.}
A never-worse relation (NWR) on MDP states~\cite{DBLP:conf/ijcai/BharadwajRPT17,DBLP:conf/fossacs/0001P18} is similar in spirit to reachability orders: states are ordered according to their maximal reachability probabilities but without taking the probabilities into account.
Dependencies between the state probabilities are thus not taken into account.
Like in our setting, computing the NWR is based on the graph structure.
Its usage however is quite different, reducing the size of the MDP prior to model checking.
NWR captures most heuristics to reduce the MDP before linear programming or value iteration.
The coNP-completeness~\cite{DBLP:conf/fossacs/0001P18} indicates that checking this order is simpler than our pre-order unless coETR and coNP coincide.

\smallskip\noindent\emph{Monotonicity in parameter synthesis.}
Parameter lifting~\cite{DBLP:conf/atva/QuatmannD0JK16} exploits a form of local monotonicity to remove parameter dependencies in a pMC.
(A similar observation for continuous-time MCs was made in~\cite{DBLP:conf/cav/BrimCDS13}.)
The resulting monotonic pMC is replaced by an MDP that over-approximates the original pMC.
No efforts are made to determine global monotonicity.
Interval MCs~\cite{DBLP:conf/lics/JonssonL91,DBLP:conf/fossacs/ChatterjeeSH08} lack dependencies, thus all states are locally monotonic (but it remains unclear whether they are monotonically increasing or decreasing).
Monotonicity also affects complexity.
Hutschenreiter \emph{et al.}~\cite{DBLP:journals/corr/abs-1709-02093} recently showed that the complexity of model checking (a monotone fragment of) PCTL on monotonic pMC is lower than PCTL model checking on general pMCs.
They use a very restrictive sufficient criterion for a pMC to be monotonic: This includes none of the pMCs considered in this paper.
Monotonicity has also been considered in the context of model repair.
Pathak \emph{et al.}~\cite{DBLP:conf/nfm/PathakAJTK15} provide an efficient greedy approach to repair monotonic pMCs\footnote{Although monotonicity is not explicitly mentioned in~\cite{DBLP:conf/nfm/PathakAJTK15}.}. 
Recently, Gouberman \emph{et al.}~\cite{GOUBERMAN201932} show that particular perturbations of direct predecessors of  \good or \bad in a continuous-time MC are monotonic in the perturbation factor.

\section{Conclusion and Future Work}
We proposed a method that automatically infers the monotonicity of pMCs from the literature.
To the best of our knowledge, our paper is the first automated procedure for determining monotonicity.
Future work includes a tighter integration with parameter synthesis, and extensions to pMDPs and rewards.

\bibliographystyle{splncs04}
\bibliography{literature}

\clearpage
\pagebreak

\appendix
\section{Proofs}

\subsection{Proof of Theorem~\ref{lem:complexitymonproblem} and Lem.~\ref{lem:complexity}}
Reductions are many-to-one reductions. The proofs use notions from Sect.~\ref{sec:ro}.

We use the following formal definition of the decision problems:

\begin{itemize}
\item
\textbf{(pMC-V)}  
\emph{Given a pMC $\pdtmc$ with a threshold $\lambda \in [0,1]$, and a graph-preserving region $R$, does $\solWithM[s]{T}{\pdtmc}(\vec{u}) \geq \lambda$ hold for all $\vec{u} \in R$?}
\item 
\textbf{(pMC-RO)}  
\emph{Given a pMC $\pdtmc$ with a graph-preserving region $R$, 
	and two states $s_1$ and $s_2$, does $s_1 \reachOrderEq[R,T] s_2$ hold, where  $\reachOrderEq[R,T]$ denotes the exhaustive reachability order?}
\item
\textbf{(pMC-Mon)}
\emph{Given a pMC $\pdtmc$ with a parameter $p \in \Var$, and a graph-preserving region $R$, does $\monIncr[p]{\pdtmc}$ hold?}
\end{itemize}

\montheorem*
\lemcomplexity*
With that Lem.~\ref{lem:complexity} thus states that pMC-V is reducible to pMC-RO, 
and Theorem~\ref{lem:complexitymonproblem} states that pMC-V is reducible to pMC-Mon.

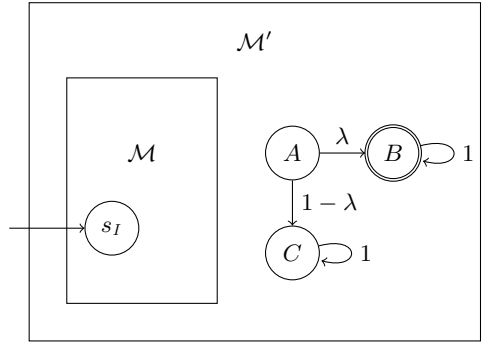
\begin{figure*}[t]
\centering
	\begin{tikzpicture}
		\draw (-0.5, -0.5) rectangle (5.5,4);
		\node[] at (2.5,3.5) {$\pdtmc'$};
		\draw (0,0) rectangle (2,3);
		\node[] at (1,2) {$\pdtmc$};
		
		\node[state, initial, initial distance=1cm] at (0.6,1) (sinit) {$\sinit$};
		
		\node[state] at (3,2) (A) {$A$};
		\node[state, accepting, right=0.6cm of A] (B) {$B$};
		\node[state, below=0.6cm of A] (C) {$C$};
		
		\draw[->] (A) edge node[auto] {$\lambda$} (B);
		\draw[->] (A) edge node[auto] {$1-\lambda$} (C);
		
		\draw[->] (B) edge[loop right] node[auto] {$1$} (B);
		\draw[->] (C) edge[loop right] node[auto] {$1$} (C);
	\end{tikzpicture}
	\caption{Outline of the construction for the proof of Lem.~\ref{lem:complexity}}
	\label{fig:vertoro}
\end{figure*}

\begin{proof}[of Lem.~\ref{lem:complexity}]
Given an instance of pMC-V, we construct an instance of pMC-RO as follows (see also Fig.~\ref{fig:vertoro}):
We construct a new $\pDtmcInit[']$ by taking $\pDtmcInit[]$ and adding an (unconnected) gadget consisting of three states $A, B, C$. Let $B, C$ be sink states. We add a transition from $A$ to $B$ with probability $\lambda$.
Formally:
\begin{itemize}
	 \item $S' = S \uplus \{A, B, C\}$
	 \item  $\sinit' = \sinit$
	 \item $T' = T \cup \{ B \}$
	 \item $\Var' = \Var$
	 \item  \begin{tabular}{l} $\probdtmc'(s,s') =\begin{cases}
	 \probdtmc(s,s') & \text{if }s, s' \in S \\
	 \lambda & \text{if }s=A, s'=B \\
	 1-\lambda & \text{if }s=A, s'=C \\
	 1 & \text{if }s=s'=B \\
	 1 & \text{if }s=s'=C \\ 	
	 0 & \text{otherwise}			
	  \end{cases}$\end{tabular}

\end{itemize}	 
 States $s_1$ and $s_2$ now correspond to $A$ and $\sinit$, respectively.
This transformation clearly is in polynomial time. 

Now, the probability to reach the target from $A$ is $\lambda$ for any parameter instantiation.
Thus, $A \prec \sinit$ iff $\solWithM[s_I]{T}{\pdtmc}(\vec{u}) \geq \lambda$ for all $u \in R$.\qed
\end{proof}

We now turn our attention to Theorem~\ref{lem:complexitymonproblem}. 
We first show:
\begin{lemma}
\label{lem:monishard}
	pMC-RO is polynomially reducible to pMC-Mon.
\end{lemma}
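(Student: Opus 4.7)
The plan is a gadget-based reduction that introduces a single fresh parameter $q$ whose monotonicity direction exactly encodes the comparison between $\sol[s_1]{T}{\pdtmc}$ and $\sol[s_2]{T}{\pdtmc}$ on $R$. Given an instance $(\pDtmcInit, R, s_1, s_2)$ of pMC-RO, I will construct a new pMC $\pdtmc'$ by adding a fresh initial state $\sinit'$ and a fresh parameter $q \notin V$, together with two transitions $\probdtmc'(\sinit', s_2) = q$ and $\probdtmc'(\sinit', s_1) = 1{-}q$; all other transitions, the target set, and the parameter space are inherited from $\pdtmc$. The region becomes $R' = R \times (0,1)$, which is clearly graph-preserving for $\pdtmc'$ since $R$ is graph-preserving for $\pdtmc$ and $q \in (0,1)$ keeps both new edges with nonzero probability.

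The central calculation is that, by construction,
\[
\sol[\sinit']{T}{\pdtmc'}(\vec{u}, q) \;=\; q \cdot \sol[s_2]{T}{\pdtmc}(\vec{u}) \;+\; (1{-}q) \cdot \sol[s_1]{T}{\pdtmc}(\vec{u}),
\]
so the partial derivative with respect to $q$ equals $\sol[s_2]{T}{\pdtmc}(\vec{u}) - \sol[s_1]{T}{\pdtmc}(\vec{u})$. Since $\sol[\sinit']{T}{\pdtmc'}$ is continuously differentiable on the open region $R'$, it is monotonic increasing in $q$ on $R'$ iff this derivative is everywhere non-negative, which is exactly the statement $\forall \vec{u} \in R.\; \sol[s_1]{T}{\pdtmc}(\vec{u}) \leq \sol[s_2]{T}{\pdtmc}(\vec{u})$, i.e., $s_1 \reachOrderEq[R,T] s_2$ in the exhaustive reachability order of $\pdtmc$. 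Hence $\pdtmc'$ is monotonic increasing in $q$ on $R'$ iff the pMC-RO instance is a \texttt{yes}-instance.

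The construction adds one state, one parameter, and two transitions, so it is linear-time; together with the equivalence above this gives the desired polynomial-time reduction. The only subtle point, which is also the main thing to get right, is ensuring that the resulting region is graph-preserving so that the characterisation via the sign of the derivative is meaningful. Restricting $q$ to the open interval $(0,1)$ handles this cleanly; one could equivalently use any open subinterval of $(0,1)$ without affecting the reduction, since the value of $\sol[s_1]{T}{\pdtmc} - \sol[s_2]{T}{\pdtmc}$ does not depend on $q$. Combining this lemma with Lem.~\ref{lem:complexity} yields Theorem~\ref{lem:complexitymonproblem}.
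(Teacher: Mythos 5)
Your reduction is correct and is essentially the paper's own: both add a fresh initial state with a fresh parameter splitting probability $q$ / $1{-}q$ between $s_2$ and $s_1$, so that the new solution function is the convex combination $q\cdot\sol[s_2]{T}{\pdtmc}+(1{-}q)\cdot\sol[s_1]{T}{\pdtmc}$ and monotonicity in the fresh parameter is equivalent to $s_1\reachOrderEq[R,T] s_2$. Your explicit handling of the region $R'=R\times(0,1)$ and the derivative argument is a slightly cleaner write-up of the same idea.
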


\begin{figure*}[t]
\centering
	\begin{tikzpicture}
		\draw (-2.2,-0.5) rectangle (2.5,4);
		\node[] at (0.15,3.5) {$\pdtmc'$};
		\draw (0,0) rectangle (2,3);
		\node[] at (1,2.3) {$\pdtmc$};
		
		\node[state] at (0.8,1.6) (s1) {$s_1$};
		\node[state] at (0.8,0.5) (s2) {$s_2$};
		
		\node[state, initial, initial distance=1cm] at (-1.2,1) (D) {$D$};

		\draw[->] (D) edge node[pos=0.3,above] {$1-p$} (s1);
		\draw[->] (D) edge node[pos=0.3,below] {$p$} (s2);
		
	\end{tikzpicture}
	\caption{Outline of the construction for the proof of Lem.~\ref{lem:monishard}}
	\label{fig:rotomon}
\end{figure*}
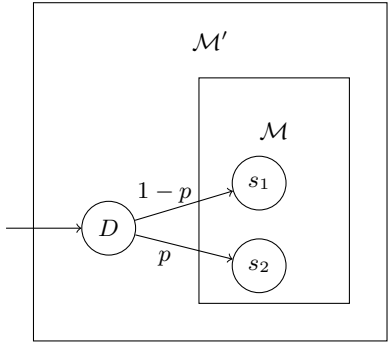

\begin{proof}

	Given an instance of pMC-RO, with a graph-preserving region $R$, 
	and two states $s_1$ and $s_2$, we construct an instance of pMC-Mon as follows.
	For $\pDtmcInit$, we construct $\pDtmcInit[']$ by extending $\pdtmc$, see also Fig.~\ref{fig:rotomon}.
	We introduce a fresh variable $p$ and a fresh initial state $D$, and connect $D$ with $s_1$ with probability $1-p$, and $D$ with $s_2$ with probability $p$. 
Formally:	
\begin{itemize}
	 \item $S' = S \uplus \{D\}$
	 \item $\sinit' = D$
	 \item $T' = T $
	 \item $\Var' = \Var \uplus \{ p \}$
	 \item \begin{tabular}{l} $\probdtmc'(s,s') =\begin{cases}
	 \probdtmc(s,s') & \text{if }s, s' \in S \\
	 1-p & \text{if }s=D, s'=s_1 \\
	 p & \text{if }s=D, s'=s_2 \\
	 0 & \text{otherwise}			
	  \end{cases}$ \end{tabular}s
\end{itemize}	 
 States $s_1$ and $s_2$ now correspond to $A$ and $\sinit$ from the previous reduction proof.
This transformation clearly is in polynomial time. 

First assume $s_1 \reachOrderEq s_2$:
Then, for any instantiation $\vec{u} \in R$ s.t.\ $\solWithM[s_1]{T}{\pdtmc}(\vec{u}) \leq \solWithM[s_2]{T}{\pdtmc}(\vec{u})$. 
Clearly, increasing $p$ then increases $\solWithM[D]{T}{\pdtmc'}(\vec{u})$, so $\pdtmc'$ is monotonic increasing in $p$ on $R$.

Now assume $s_1 \not\reachOrderEq s_2$:
 Then, there is an instantiation $\vec{u} \in R$ s.t.\ $\solWithM[s_1]{T}{\pdtmc}(\vec{u}) \geq \solWithM[s_2]{T}{\pdtmc}(\vec{u})$. 
 This even holds when varying $p$, as $\solWithM[s_1]{T}{\pdtmc}(\vec{u})$, $\solWithM[s_2]{T}{\pdtmc}(\vec{u})$ are independent of $p$. 
 We observe that for this instantiation $\vec{u}$ increasing $p$ decreases the value of $\solWithM[D]{T}{\pdtmc'}(\vec{u})$, so $\pdtmc'$ is not monotonic increasing in $p$.

\qed
\end{proof}

\noindent Theorem~\ref{lem:complexitymonproblem} is now an immediate corollary to Lemmas \ref{lem:complexity} and \ref{lem:monishard}.

\subsection{Proof of Lem.~\ref{lem:localmonsucc3}}
\label{app:lem:localmonsucc3}
We prove Lemma~\ref{lem:localmonincr}, which is Lemma~\ref{lem:localmonsucc3} restricted to two successors. We then sketch the proof for the case with three successors ($n=3$). Lemma~\ref{lem:localmonsucc3} can be proven for any other $n > 3$ in a similar manner.
\begin{restatable}{lemma}{lemlocalmonincr}
	\label{lem:localmonincr}
	Let $s \in S$ with $\Succ(s) = \{ s_1, s_2 \}$ and $s_2 \reachOrderEq s_1$. Furthermore let $\probdtmc(s,s_1)= f$ and $\probdtmc(s,s_2) = 1-f$ for $f\in\Q[V]$. Then for any parameter $p$:
	\[\monIncrLocal[p]{\sol[s]{T}{\pdtmc}} \quad\text{ iff }\quad \monIncr[p]{f}.\]
\end{restatable}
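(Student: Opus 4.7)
}
The plan is to unfold Definition~\ref{def:localMonIncr} at $s$ using $\Succ(s) = \{s_1, s_2\}$, $\probdtmc(s,s_1) = f$, and $\probdtmc(s,s_2) = 1-f$. Since $\frac{\partial}{\partial p}(1-f) = -\frac{\partial f}{\partial p}$, the local-monotonicity condition
\[
\frac{\partial f}{\partial p}(\vec{u}) \cdot \sol[s_1]{T}{\pdtmc}(\vec{u}) \;+\; \frac{\partial (1-f)}{\partial p}(\vec{u}) \cdot \sol[s_2]{T}{\pdtmc}(\vec{u}) \;\geq\; 0
\]
rewrites to
\[
\frac{\partial f}{\partial p}(\vec{u}) \cdot \bigl(\sol[s_1]{T}{\pdtmc}(\vec{u}) - \sol[s_2]{T}{\pdtmc}(\vec{u})\bigr) \;\geq\; 0 \qquad \forall \vec{u}\in R. \tag{$\ast$}
\]
By the hypothesis $s_2 \reachOrderEq s_1$ and the definition of a reachability order, the second factor in $(\ast)$ is non-negative everywhere on $R$. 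The proof of both implications then amounts to exploiting this sign.

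For the direction $\Leftarrow$, assume $\monIncr[p]{f}$. Because $f\in\mathbb{Q}[V]$ is continuously differentiable on $\R^n$, the remark following Definition~\ref{def:monotoneFunction} gives $\frac{\partial f}{\partial p}(\vec{u}) \geq 0$ for all $\vec{u}\in R$. Condition $(\ast)$ is then a product of two non-negative quantities, hence non-negative, so $\monIncrLocal[p]{\sol[s]{T}{\pdtmc}}$ holds.

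For the direction $\Rightarrow$, assume $(\ast)$ and suppose for the sake of contradiction that there exists $\vec{u}^{\ast}\in R$ with $\frac{\partial f}{\partial p}(\vec{u}^{\ast}) < 0$. If $\sol[s_1]{T}{\pdtmc}(\vec{u}^{\ast}) > \sol[s_2]{T}{\pdtmc}(\vec{u}^{\ast})$, the product in $(\ast)$ is strictly negative at $\vec{u}^{\ast}$, contradicting the assumption. Otherwise, $\sol[s_1]{T}{\pdtmc}(\vec{u}^{\ast}) = \sol[s_2]{T}{\pdtmc}(\vec{u}^{\ast})$; here I would use that on a graph-preserving region the solution functions are rational (indeed continuously differentiable, as noted before the feasibility problem), so the coincidence locus $\{\vec{u}\in R : \sol[s_1]{T}{\pdtmc}(\vec{u}) = \sol[s_2]{T}{\pdtmc}(\vec{u})\}$ is either all of $R$ or a proper analytic subvariety with dense complement. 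In the latter case, $\frac{\partial f}{\partial p}$, being a polynomial hence continuous, stays $<0$ on an open neighbourhood of $\vec{u}^{\ast}$ which must intersect the dense complement, contradicting $(\ast)$ again.

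The main obstacle is the degenerate subcase where the coincidence locus is all of $R$, i.e.\ $s_1 \equiv s_2$ on $R$: there $(\ast)$ holds vacuously and gives no information about $\frac{\partial f}{\partial p}$. I expect the proof to dispatch this either by noting that $s_2 \reachOrderEq s_1$ together with $s_1 \equiv s_2$ means the two successor probabilities are interchangeable, so any assignment of which is ``$f$'' and which is ``$1-f$'' is equally admissible, and the lemma is to be read as providing a \emph{sufficient} witness for local monotonicity in this case; or by appealing directly to Lemma~\ref{lem:reachrelativesucc}, which forces $s \equiv s_1 \equiv s_2$ and makes $\sol[s]{T}{\pdtmc}$ independent of the local choice of $f$, so the statement degenerates consistently on both sides. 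Apart from this point, the proof is a short algebraic manipulation plus one continuity/density argument.
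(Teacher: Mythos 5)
Your proof follows the same route as the paper's: unfold Def.~\ref{def:localMonIncr} at $s$, use $\derivative{p}{(1-f)} = -\derivative{p}{f}$ to factor the local condition as $\derivative{p}{f}(\vec{u}) \cdot \bigl(\sol[s_1]{T}{\pdtmc} - \sol[s_2]{T}{\pdtmc}\bigr)(\vec{u}) \geq 0$, and exploit that the second factor is non-negative by $s_2 \reachOrderEq s_1$. Your $\Leftarrow$ direction is exactly the paper's argument. The difference is in the $\Rightarrow$ direction: the paper simply asserts the final equivalence ``$\ldots \iff \derivative{p}{f}(\vec{u}) \geq 0$ for all $\vec{u} \in R$'' as one more step in a chain of iffs, with no justification, whereas you supply the missing continuity/density argument for points where the solution functions coincide, and you correctly isolate the genuinely degenerate case $s_1 \equiv_{R,T} s_2$, where the product is identically zero and the stated ``iff'' actually fails (local monotonicity holds vacuously while $\monIncr[p]{f}$ need not). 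This is a real, if minor, imprecision in the lemma as stated; it is harmless for the paper because everything downstream (the corollary after Def.~\ref{def:sufficientRO}, the monotonicity check via Lem.~\ref{lem:localmonsucc3}, and the induction step of Thm.~\ref{thm:monIncr}) only needs the $\Leftarrow$ direction, or rather the non-negativity of the local sum itself. Your density argument does implicitly assume some regularity of $R$ (e.g.\ that $R$ is open or the closure of an open set, so that the complement of a proper analytic subvariety is dense in $R$); that is consistent with the regions used in the paper but worth stating. In short: same decomposition, but your version of the forward implication is the one that actually needs proving, and your flagging of the $s_1 \equiv s_2$ case is a correct observation about the lemma, not a defect of your proof.
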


\begin{proof}[Lem.~\ref{lem:localmonincr}]
	\begin{align*}
	\monIncrLocal[p]{\sol[s]{T}{\pdtmc}} & 
	\overset{\mathtext{Def.~\ref{def:localMonIncr}}}{\iff}
	\left(\sum_{s' \in \Succ(s)} \sol[s']{T}{\pdtmc} \cdot \derivative{p}{\probdtmc(s,s')} \right)(\vec{u}) \geq 0 & \forall \vec{u} \in R \\
	& \iff \left(\sol[s_1]{T}{\pdtmc} \cdot \derivative{p}{f} + \sol[s_2]{T}{\pdtmc} \cdot \derivative{p}{(1-f)}\right)(\vec{u}) \geq 0& \forall \vec{u} \in R \\
	& \iff \derivative{p}{f}(\vec{u}) \cdot \underbrace{\left(\sol[s_1]{T}{\pdtmc} - \sol[s_2]{T}{\pdtmc}\right)(\vec{u})}_{\geq 0 \text{ as } s_2 \reachOrderEq s_1} \geq 0 & \forall \vec{u} \in R\\
	& \overset{}{\iff} \derivative{p}{f(\vec{u})} \geq 0& \forall \vec{u} \in R.
	\end{align*}
	\qed
\end{proof}

\lemLocalMonSucc*

For the case with $n = 3$ we have:
\begin{align*}
&\sol[s_3]{T}{\pdtmc} \leq \sol[s_2]{T}{\pdtmc} \leq \sol[s_1]{T}{\pdtmc} \text{ as } s_3 \reachOrderEq s_2 \reachOrderEq s_1, \text{ and}\\
&(f_1 + f_2 + f_3)(\vec{u})= 1 \text{ as the valuation should be well-defined.}
\end{align*}

We have to prove that:
\begin{align*}
\monIncrLocal[p]{\sol[s]{T}{\pdtmc}} &
\overset{\mathtext{Lem.~\ref{lem:localmonsucc3}}}{\iff} \begin{cases}
\monIncr{f_1} \text{, } \monDecr{f_2} \text{ and } \monDecr{f_3} & (i=1),\text{ or} \\
\monIncr{f_1} \text{, } \monIncr{f_2} \text{ and } \monDecr{f_3} &(i=2),\text{ or} \\
\monIncr{f_1} \text{, } \monIncr{f_2} \text{ and } \monIncr{f_3} &(i=3).
\end{cases}
\end{align*}

We have to show (Def.~\ref{def:localMonIncr}): 
\begin{align*}
 \left(\sum_{s' \in \Succ(s)} \sol[s']{T}{\pdtmc} \cdot \derivative{p}{\probdtmc(s,s')} \right)(\vec{u})\geq 0 & &\forall \vec{u} \in R
\end{align*}

For $i=1$ we make the following observation:
\begin{align*}
	& \left(\sum_{s' \in \Succ(s)} \sol[s']{T}{\pdtmc} \cdot \derivative{p}{\probdtmc(s,s')} \right)(\vec{u}) \\
	&  = \left(\sol[s_1]{T}{\pdtmc} \cdot \derivative{p}{f_1} + \sol[s_2]{T}{\pdtmc} \cdot \derivative{p}{f_2} + \sol[s_3]{T}{\pdtmc} \cdot \derivative{p}{f_3}\right)(\vec{u}) \\
	& =  \left(\sol[s_1]{T}{\pdtmc} \cdot \derivative{p}{(1-f_2-f_3)} + \sol[s_2]{T}{\pdtmc} \cdot \derivative{p}{f_2} + \sol[s_3]{T}{\pdtmc} \cdot \derivative{p}{f_3}\right)(\vec{u}) \\
	& =  \derivative{p}{f_2(\vec{u})}\cdot\left(\sol[s_2]{T}{\pdtmc} - \sol[s_1]{T}{\pdtmc}\right)(\vec{u}) + \derivative{p}{f_3(\vec{u})} \cdot \left(\sol[s_3]{T}{\pdtmc} - \sol[s_1]{T}{\pdtmc}\right)(\vec{u})
\end{align*}

We observe that: 
\[ \underbrace{\derivative{p}{f_2(\vec{u})}}_{\leq 0}\cdot \underbrace{\left(\sol[s_2]{T}{\pdtmc} - \sol[s_1]{T}{\pdtmc}\right)(\vec{u})}_{\leq 0} + \underbrace{\derivative{p}{f_3(\vec{u})}}_{\leq 0}\cdot \underbrace{\left(\sol[s_3]{T}{\pdtmc} - \sol[s_1]{T}{\pdtmc}\right)(\vec{u})}_{\leq 0} \geq 0\]

From this, it follows that Lem.~\ref{lem:localmonsucc3} holds if the condition for $i=1$ holds. The second case is proved in a similar manner.

For $i=3$ we observe the following:
\begin{align*}
\derivative{p}{(f_1 + f_2 + f_3)(\vec{u})} &= \derivative{p}{1}= 0
\end{align*}
Therefore, not all derivatives can be negative. 


For $n > 3$ the proof follows in a similar way.

\clearpage
\subsection{Proof of Theorem~\ref{thm:monIncr}}
\thmMonIncr*
In order to proof Theorem~\ref{thm:monIncr}, we first introduce the notion of paths in pMCs, and provide two auxiliary Lemma's on paths. Then we lift local monotonicity (Def.~\ref{def:localMonIncr}) to local monotonicity for $n$ steps.

An \emph{infinite path} of a pMC $\pdtmc$ is a non-empty infinite sequence $\pi = s_0 s_1 s_2 \ldots $ of states $s_i \in S$ such that $\probdtmc(s_i, s_{i+1}) > 0$ for $i\geq0$. 
A \emph{finite path} of a pMC $\pdtmc$ is a finite, non-empty prefix of an infinite path in $\pdtmc$.
We define the \emph{length of a finite path} $|\pi| = n$ for $\pi = s_0 s_1 \ldots s_n$, and let $\pi_i$ be the $i^{th}$ state of a path.
Let $\PathsLength{n}(s)$ ($\Paths(s)$) be the set of all finite paths with length $n$ (infinite paths) starting from state $s\in S$. Let $\Pr(\pi) = \sum_{i=0}^{n-1} \probdtmc(s_i, s_{i+1})$ be the probability of a finite path $\pi = s_0 s_1 \ldots s_n$.
This can be lifted to the probability of an infinite path via a cylinder set construction~\cite{BK08}.

For a finite path $\pi = s_0 s_1 \ldots s_n$ we let $ [\pi \models \finally T]$ be $1$ if there exists a state $s_i$ such that $s_i \in T$, and $0$ otherwise. 

From this we observe that for any $s\in S$:
\[\sol[s]{T}{\pdtmc}(\vec{u}) = \sum_{\pi \in \Paths(s)} \Pr(\pi) \cdot [\pi \models \finally T].\]

Let $Z \subseteq S$ be the set of zero states ($\forall s \in Z.\sol[s]{T}{} =0$). 
Observe that if $ [\pi \models \finally T] = 1$ ($ [\pi \models \finally Z]  = 1$) then $ [\pi \models \finally Z] = 0$ ($[\pi \models \finally T] = 0$). 
We can write the set of paths from $s \in S$ as the union of three disjoint sets. 
	
\begin{lemma}
		Let $\PathsLength{n}(s)$ = $\PathsLength{n}_T(s)\cup \PathsLength{n}_Z(s) \cup \PathsLength{n}_?(s)$ where $\PathsLength{n}_T(s) $, $\PathsLength{n}_Z(s) $, and $\PathsLength{n}_?(s)$  are disjoint, and
		\begin{itemize}
			\item $\PathsLength{n}_T(s) = \{\PathsLength{n}(s)\,|\, \pi \models \finally T\}$ , 
			\item $\PathsLength{n}_Z(s) = \{\PathsLength{n}(s)\,|\, \pi \models \finally Z\}$ , and
			\item $\PathsLength{n}_?(s) = \{\PathsLength{n}(s)\,|\, \pi \not\models \finally T \wedge \pi \not \models \finally Z\}$.
		\end{itemize}
		 
\end{lemma}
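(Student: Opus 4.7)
The lemma asserts two things: (i) coverage, that $\PathsLength{n}(s) = \PathsLength{n}_T(s) \cup \PathsLength{n}_Z(s) \cup \PathsLength{n}_?(s)$, and (ii) pairwise disjointness of these three subsets. I would handle both simultaneously by a case distinction on the two Boolean predicates $\pi \models \finally T$ and $\pi \models \finally Z$, whose four possible combinations completely determine which subset a given $\pi \in \PathsLength{n}(s)$ falls into.

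Three of the four combinations are immediate from the defining formulas: $\pi \models \finally T$ places $\pi$ in $\PathsLength{n}_T(s)$; $\pi \models \finally Z$ (without $\pi \models \finally T$) places $\pi$ in $\PathsLength{n}_Z(s)$; and $\pi \not\models \finally T \wedge \pi \not\models \finally Z$ places $\pi$ in $\PathsLength{n}_?(s)$. These three combinations simultaneously witness coverage for those cases and account for the two easy disjointness conditions $\PathsLength{n}_?(s) \cap \PathsLength{n}_T(s) = \emptyset$ and $\PathsLength{n}_?(s) \cap \PathsLength{n}_Z(s) = \emptyset$, both of which follow from the conjoined negations $\pi \not\models \finally T$ and $\pi \not\models \finally Z$ in the defining formula for $\PathsLength{n}_?(s)$.

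The only genuine step is the fourth combination $\pi \models \finally T \wedge \pi \models \finally Z$; ruling this out establishes both the remaining disjointness $\PathsLength{n}_T(s) \cap \PathsLength{n}_Z(s) = \emptyset$ and also confirms that the coverage case analysis is exhaustive (no path belongs to two listed sets). My plan is to invoke the standard preprocessing convention fixed at the end of Section~\ref{sec:preliminaries}, under which the pMC has a single $\good$ state and a single $\bad$ state; consequently $T = \{\good\}$ and $Z = \{\bad\}$, and both are absorbing sinks. Since $\sol[\good]{T}{\pdtmc} \equiv 1$ whereas $\sol[\bad]{T}{\pdtmc} \equiv 0$ on any graph-preserving region, we have $\good \neq \bad$, hence $T \cap Z = \emptyset$. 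A path $\pi$ that visits $\good$ must stay at $\good$ thereafter (and analogously for $\bad$), so $\pi$ cannot visit both sets. This is precisely the observation stated immediately before the lemma and is the only non-trivial ingredient in the argument; the rest is bookkeeping.
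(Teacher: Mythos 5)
Your proposal is correct and matches the paper's treatment: the paper gives no separate proof for this lemma, instead relying on exactly the observation you isolate (that $[\pi \models \finally T] = 1$ and $[\pi \models \finally Z] = 1$ are mutually exclusive), stated immediately before the lemma. Your justification of that observation via the single absorbing $\good$/$\bad$ states from the preprocessing convention is the intended (and needed) ingredient, so there is nothing to add.
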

By definition of $T$ ($Z$), we obtain for $\PathsLength{n}_T(s)$ ($\PathsLength{n}_Z(s)$): $\sol[\pi_n]{T}{} = 1$ ($\sol[\pi_n]{T}{} = 0$). 
In a similar way, we can split the infinite paths. 
Observe that $\Paths_?(s) = \emptyset$.

\begin{lemma}
		Let $\Paths(s)$ = $\Paths_T(s)\cup \Paths_Z(s) \cup \Paths_?(s)$ where $\Paths_T(s) $, $\Paths_Z(s) $, and $\Paths_?(s)$  are disjoint, and
		\begin{itemize}
			\item $\Paths_T(s) = \{\Paths(s)\,|\, \pi \models \finally T\}$,
			\item $\Paths_Z(s) = \{\Paths(s)\,|\, \pi \models \finally Z\}$, and
			\item $\Paths_?(s) = \{\Paths(s)\,|\, \pi \not\models \finally T \wedge \pi \not \models \finally Z\} = \emptyset$.
		\end{itemize}
\end{lemma}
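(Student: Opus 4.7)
The plan is to establish the three assertions bundled in the lemma: pairwise disjointness of $\Paths_T(s)$, $\Paths_Z(s)$, $\Paths_?(s)$; that their union is $\Paths(s)$; and the key nontrivial claim $\Paths_?(s) = \emptyset$ (read in the measure-theoretic sense consistent with the cylinder-set construction used for infinite paths). Disjointness of $\Paths_?(s)$ against the other two follows directly from the defining conjunction $\pi \not\models \finally T \wedge \pi \not\models \finally Z$. The union equals $\Paths(s)$ by a simple case split on whether $\pi \models \finally T$, $\pi \models \finally Z$, or neither.

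The two ingredients that do the real work are the \emph{absorption} of $T$ and $Z$. For $T$: after the standard preprocessing noted in Sect.~\ref{sec:preliminaries}, $T = \{\good\}$ is a singleton sink with a self-loop, so any path entering $T$ never leaves it. For $Z$: if $z \in Z$ and $z \to s'$ with $\probdtmc(z,s')(\vec{u}) > 0$ for some graph-preserving $\vec{u}$, then from $\sol[z]{T}{}(\vec{u}) = \sum_{s''\in\Succ(z)} \probdtmc(z,s'')(\vec{u}) \cdot \sol[s'']{T}{}(\vec{u}) = 0$ and nonnegativity of every summand I conclude $\sol[s']{T}{}(\vec{u}) = 0$, and because the region is graph-preserving this forces $s' \in Z$. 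Together with $T \cap Z = \emptyset$ (the values $1$ and $0$ are incompatible), absorption yields $\Paths_T(s) \cap \Paths_Z(s) = \emptyset$: a path cannot enter both a set it never leaves and a disjoint set it never leaves.

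The main obstacle is the emptiness of $\Paths_?(s)$, which I would reduce to the classical BSCC argument for finite Markov chains. The claim is that every bottom SCC $C$ of the MC induced by any (equivalently, some) graph-preserving instantiation satisfies either $C \subseteq T$ or $C \subseteq Z$. If $C \cap T = \emptyset$, then from any $s \in C$ paths stay within $C$ forever with probability $1$, so $T$ is unreachable, giving $\sol[s]{T}{} = 0$ and hence $C \subseteq Z$; otherwise $C$ contains $\good$, and since $\good$ is an absorbing singleton, $C = \{\good\} \subseteq T$. A standard result for finite Markov chains then says almost every infinite path eventually enters some BSCC, hence almost every infinite path enters $T \cup Z$, i.e. $\Pr(\Paths_?(s)) = 0$.

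I should flag that $\Paths_?(s)$ need not be literally set-theoretically empty: in the presence of non-productive cycles among transient states there exist infinite sequences avoiding $T \cup Z$. The lemma's use of $= \emptyset$ is to be understood as ``has cylinder measure zero,'' which is exactly what is required for the subsequent use of this decomposition in computing $\sol[s]{T}{\pdtmc}(\vec{u}) = \sum_{\pi \in \Paths_T(s)} \Pr(\pi)$ and, more importantly, for lifting the local monotonicity arguments of Sect.~\ref{sec:ro} to the full solution function in the proof of Thm.~\ref{thm:monIncr}.
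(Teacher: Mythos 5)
Your proof is correct, and it in fact supplies an argument where the paper gives essentially none: the appendix simply writes ``Observe that $\Paths_?(s) = \emptyset$'' and states the lemma without proof. Your route---every bottom SCC of the MC induced by any (equivalently, by some) graph-preserving instantiation is contained in $T$ or in $Z$, and almost every infinite path eventually enters a bottom SCC---is the standard and correct justification, and your absorption arguments for $T$ and for $Z$ (the latter via nonnegativity of the summands in $\sol[z]{T}{} = \sum_{s''} \probdtmc(z,s'')\cdot\sol[s'']{T}{}$ together with graph preservation) correctly establish the disjointness of $\Paths_T(s)$ and $\Paths_Z(s)$. More importantly, you have caught a genuine imprecision in the statement itself: $\Paths_?(s)$ is \emph{not} the empty set in general. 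For the cyclic pMC of Fig.~\ref{fig:scc}, the infinite path $(s_0 s_1)^{\omega}$ uses only positive-probability transitions yet never visits $\good$ or $\bad$, so it lies in $\Paths_?(s_0)$. The equality can only hold in the sense $\Pr\bigl(\Paths_?(s)\bigr) = 0$, which is exactly the reading you propose and is all that the proof of Thm.~\ref{thm:monIncr} actually uses (the limit computation there only needs the contribution of $\PathsLength{n}_?$ to vanish as $n \to \infty$). So your proposal is not merely an alternative proof; it is a repaired and substantiated version of a claim the paper asserts in a literally false form.
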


\begin{definition}[Locally monotonic increasing for n steps]
	\label{def:localMonIncrSteps}
	$\sol[s]{T}{\pdtmc}$ is \emph{locally monotonic increasing} for $n$ steps in parameter $p$ (at $s$) on region $R$, denoted $\monIncrLocalSteps[p]{\sol[s]{T}{\pdtmc}}{n}$, if for all $\vec{u} \in R$:
	\[\left(\sum_{\pi \in \PathsLength{n}(s)} \left(\derivative{p}{\Pr(\pi)}\right) \cdot \sol[\pi_n]{T}{\pdtmc}  \right)(\vec{u}) \geq 0.\]
	\emph{Locally monotone decreasing} for $n$ steps, denoted $\monDecrLocalSteps[p]{\sol[s]{T}{\pdtmc}}{n}$, is defined analogously.
\end{definition}

We observe that for $n=1$, Def.~\ref{def:localMonIncrSteps} corresponds to Def.~\ref{def:localMonIncr}. 

\begin{proof}[Theorem~\ref{thm:monIncr}]
	We showcase the proof for $\forall s \in S. |\Succ(s)| \leq 2$. The general case can be shown analogously.
	
	We want to show:
		\[\Big(\forall s \in S.\,\monIncrLocal[p]{\sol[s]{T}{\pdtmc}}\Big) \implies\monIncr[p]{\sol[s]{T}{\pdtmc}}.\]
		
	To that end, we make the following claims:
	\begin{align}
		\label{eq:localMonToNSteps}
		\Big(\forall s \in S.\,\monIncrLocal[p]{\sol[s]{T}{\pdtmc}}\Big) \implies \lim_{n\rightarrow\infty} \left( \monIncrLocalSteps[p]{\sol[s]{T}{\pdtmc}}{n}\right)
		\end{align}
and
	\begin{align}
		\label{eq:NStepsToGlobalMon}
		 \lim_{n\rightarrow\infty}\left(\left(\sum_{\pi \in \PathsLength{n}(\sinit)} \left(\derivative{p}{\Pr(\pi)}\right) \cdot \sol[\pi_n]{T}{\pdtmc}\right)(\vec{u}) \right) \geq 0\nonumber \\
		\implies \derivative{p}{\left(\sum_{\pi \in \Paths(\sinit)} \Pr(\pi) \cdot [\pi \models \finally T]\right)}(\vec{u}) \geq 0
	\end{align}
	Then we derive:
	\begin{multline*}
	\Big(\forall s \in S.\,\monIncrLocal[p]{\sol[s]{T}{\pdtmc}}\Big) \overset{\mathtext{Eq.~(\ref{eq:localMonToNSteps})}}{\implies}  \lim_{n\rightarrow\infty} \left( \monIncrLocalSteps[p]{\sol[s]{T}{\pdtmc}}{n}\right)\\
	\overset{\text{Def.~\ref{def:localMonIncrSteps}}}{\implies} \lim_{n\rightarrow\infty}\left(\left(\sum_{\pi \in \PathsLength{n}(\sinit)} \left(\derivative{p}{\Pr(\pi)}\right) \cdot \sol[\pi_n]{T}{\pdtmc}\right)(\vec{u}) \right) \geq 0\\
	\overset{\mathtext{Eq.~(\ref{eq:NStepsToGlobalMon})}}{\implies} \derivative{p}{\left(\sum_{\pi \in \Paths(\sinit)} \Pr(\pi) \cdot [\pi \models \finally T]\right)}(\vec{u}) \geq 0
	\overset{\mathtext{Def.~\ref{def:monotoneFunction}}}{\iff}\monIncr[p]{\sol[s]{T}{\pdtmc}}
	\end{multline*}
	It remains to show our claims.
	
	\paragraph{Proof of~(\ref{eq:localMonToNSteps}).}
	By induction on $n$ we show:
	\begin{equation*}
	\Big(\forall s \in S.\,\monIncrLocal[p]{\sol[s]{T}{\pdtmc}}\Big) \implies \forall n. \monIncrLocalSteps[p]{\sol[s]{T}{\pdtmc}}{n}.
	\end{equation*}
	Furthermore, we observe that:
	\[\left( \forall n. \monIncrLocalSteps[p]{\sol[s]{T}{\pdtmc}}{n} \right)\implies \lim_{n\rightarrow\infty} \monIncrLocalSteps[p]{\sol[s]{T}{\pdtmc}}{n}\]
	\emph{Base case: }$n=1$. 
	This immediately follows from Def.~\ref{def:localMonIncr}.\\
	\emph{Induction step:}
	Assume that for $n\leq k$, the following holds:
	\begin{equation}
	\Big(\forall s \in S.\,\monIncrLocal[p]{\sol[s]{T}{\pdtmc}}\Big) \implies \monIncrLocalSteps[p]{\sol[\sinit]{T}{\pdtmc}}{n}.\label{prf:lem:monIncr:IH}\tag{IH}
	\end{equation}	
	We want to show for $n=k+1$:
	\begin{equation*}
	\Big(\forall s \in S.\,\monIncrLocal[p]{\sol[s]{T}{\pdtmc}}\Big) \implies \monIncrLocalSteps[p]{\sol[s]{T}{\pdtmc}}{k+1}.
	\end{equation*}	
		Let $\pi = \pi'\pi_{k+1}$. From Def.~\ref{def:localMonIncrSteps} we obtain:
	\[ \monIncrLocalSteps[p]{\sol[s]{T}{\pdtmc}}{k+1} \iff \forall \vec{u} \in R.\,\left(\sum_{\pi \in \PathsLength{k+1}(s)} \left(\derivative{p}{\Pr(\pi)}\right) \cdot \sol[\pi_{k+1}]{T}{\pdtmc}  \right)(\vec{u}) \geq 0.\]

	As we are only considering graph-preserving valuations (and assume at most 2 successors), we distinguish two cases. 
	If $\pi_k$ has one successor, then $\probdtmc(\pi_k, \pi_{k+1}) = 1$. 
	Therefore, $\sol[\pi_{k}]{T}{\pdtmc} = \sol[\pi_{k+1}]{T}{\pdtmc}$. 
	For this the induction step holds. 
	If $\pi_k$ has two successors, then $\pi_{k+1} \in \{s_1, s_2\}$. Let $\probdtmc(\pi_k, s_1) = f$, and $\probdtmc(\pi_k, s_2) = 1-f$. 
	As $\monIncrLocal[p]{\sol[\pi_k]{T}{\pdtmc}}$, we obtain $s_2 \reachOrderEq s_1$ and $\monIncr[p]{f}$.

	\begin{align*}
	\allowdisplaybreaks
		&\sum_{\pi \in \PathsLength{k+1}}\left(\derivative{p}{\Pr(\pi)}\right) \cdot \sol[\pi_{k+1}]{T}{\pdtmc} \\
		&= \sum_{\pi' \in \PathsLength{k}}\left(\derivative{p}{\left(f\cdot \Pr(\pi')\right)}\right) \cdot \sol[s_1]{T}{\pdtmc} +\left(\derivative{p}{\left((1-f)\cdot \Pr(\pi')\right)}\right)\cdot \sol[s_2]{T}{\pdtmc}\\
		&\overset{\mathtext{Chain Rule}}{=} \sum_{\pi' \in \PathsLength{k}}\left(\left(\derivative{p}{f}\right)\cdot \Pr(\pi') + \left(\derivative{p}{\Pr(\pi')}\right) \cdot f \right)  \cdot \sol[s_1]{T}{\pdtmc} \\
		&\qquad+ 
		\left(\left(\derivative{p}{\left(1-f\right)}\right)\cdot \Pr(\pi') + \left(\derivative{p}{\Pr(\pi')}\right) \cdot (1-f)\right)\cdot \sol[s_2]{T}{\pdtmc}\\
		&= \sum_{\pi' \in \PathsLength{k}}
		\Pr(\pi') \cdot \left( \left(\derivative{p}{f}\right) \cdot \sol[s_1]{T}{\pdtmc} +\left(\derivative{p}{(1-f)}\right) \cdot \sol[s_2]{T}{\pdtmc}\right)\\
		&\qquad + \left(\left(\derivative{p}{\Pr(\pi')}\right)\cdot f \cdot \sol[s_1]{T}{\pdtmc} \right) + \left(\left(\derivative{p}{\Pr(\pi')}\right)\cdot (1-f) \cdot \sol[s_2]{T}{\pdtmc} \right)\\
		&= \sum_{\pi' \in \PathsLength{k}}
		\Pr(\pi') \cdot \left( \left(\derivative{p}{f}\right) \cdot \sol[s_1]{T}{\pdtmc} +\left(\derivative{p}{(1-f)}\right) \cdot \sol[s_2]{T}{\pdtmc}\right)\\
		&\qquad + \underbrace{\left(\left(\derivative{p}{\Pr(\pi')}\right)\cdot \sol[\pi_k]{T}{\pdtmc} \right)}_{\left( {f} \cdot \sol[s_1]{T}{\pdtmc} +\left({1-f}\right) \cdot \sol[s_2]{T}{\pdtmc}\right) = \sol[\pi_k]{T}{\pdtmc} }\\
		&=\underbrace{\sum_{\pi' \in \PathsLength{k}}
		\Pr(\pi') \cdot \left( \left(\derivative{p}{f}\right) \cdot \sol[s_1]{T}{\pdtmc} +\left(\derivative{p}{(1-f)}\right) \cdot \sol[s_2]{T}{\pdtmc}\right)}_{\geq 0\text{ as } s_2\reachOrderEq s_1 \text{ and } \monIncrLocal[p]{f}}\\
		&\qquad + \underbrace{\sum_{\pi' \in \PathsLength{k}}\left(\left(\derivative{p}{\Pr(\pi')}\right)\cdot \sol[\pi_k]{T}{\pdtmc} \right)}_{\geq 0 \text{ follows from~(\ref{prf:lem:monIncr:IH}})}
	\end{align*}

	\paragraph{Proof of (\ref{eq:NStepsToGlobalMon}).} We show the stronger statement:
	\begin{multline}
	\lim_{n\rightarrow\infty}\Bigg(\sum_{\pi \in \PathsLength{n}} \left(\derivative{p}{\Pr(\pi)}\right) \cdot \sol[\pi_n]{T}{\pdtmc}\Bigg) \\
 	= \derivative{p}{\left(\sum_{\pi \in \Paths} \Pr(\pi) \cdot [\pi \models \finally T]\right)}
 	\end{multline}
	We observe that: \[\lim_{n\rightarrow\infty}
	\left(\sum_{\pi \in \PathsLength{n}_?} \left(\derivative{p}{\Pr(\pi)}\right) \cdot \sol[\pi_n]{T}{\pdtmc}
	\right) = 0.\] We outline the steps below:
	\begin{align*}
	&\lim_{n\rightarrow\infty}\left(\sum_{\pi \in \PathsLength{n}} \left(\derivative{p}{\Pr(\pi)}\right) \cdot \sol[\pi_n]{T}{\pdtmc} \right)  \\
	&= \lim_{n\rightarrow\infty}
	\Big(\sum_{\pi \in \PathsLength{n}_T} \left(\derivative{p}{\Pr(\pi)}\right) \cdot \sol[\pi_n]{T}{\pdtmc} \\&\qquad\qquad+
	\sum_{\pi \in \PathsLength{n}_?} \left(\derivative{p}{\Pr(\pi)}\right) \cdot \sol[\pi_n]{T}{\pdtmc}
	\Big)  \\
	&= \lim_{n\rightarrow\infty}\left(\sum_{\pi \in \PathsLength{n}_T} \left(\derivative{p}{\Pr(\pi)}\right) \cdot \sol[\pi_n]{T}{\pdtmc}\right)   \\
	&= \sum_{\pi \in \Paths_T} \left(\derivative{p}{\Pr(\pi)}\right) \cdot [\pi \models \finally T]\\
	&= {\derivative{p}{\left(\sum_{\pi \in \Paths_T} \Pr(\pi) \cdot [\pi \models \finally T]\right)}} &{\text{as }[\pi \models \finally T] = 1} \\
	&= \derivative{p}{\left(\sum_{\pi \in \Paths} \Pr(\pi) \cdot [\pi \models \finally T]\right)}
	\end{align*}
	\qed
\end{proof}

\subsection{Proof of Lem.~\ref{lem:reachrelativesucc}}
\lemReachrelativesucc*
\begin{proof}[Lem.~\ref{lem:reachrelativesucc}]
	To prove Lem.~\ref{lem:reachrelativesucc} it is sufficient to show:
	\begin{multline*}
	\exists s' \in \Succ(s). \Succ(s) \subseteq [s'] \text{ and } s \in [s'] \\
	\iff \neg \left(\LB(\Succ(s)) \reachOrder[R,T] s \reachOrder[R,T] \UB(\Succ(s))\right)
	\end{multline*} 
	We make the following claim:
		\begin{equation}
			\label{eq:lem:reachrelativesucc:claim}
	s \in \LB(\Succ(s)) \vee s \in \UB(\Succ(s))
		\iff s \in \LB(\Succ(s)) \wedge s \in \UB(\Succ(s))
	\end{equation}
	Then we derive:
	\begin{align*}
	&\exists s' \in \Succ(s). \Succ(s) \subseteq [s'] \text{ and } s \in [s'] \\
	&\iff \forall s_1, s_2 \in \Succ(s).\sol[s_1]{T}{\pdtmc} = \sol[s_2]{T}{\pdtmc} \wedge \sol[s]{T}{\pdtmc} = \sol[s_1]{T}{\pdtmc}\\	
	&\iff \forall s' \in \Succ(s). s \equiv s'\\
	&\iff \forall s' \in \Succ(s). s \reachOrderEq s' \wedge s' \reachOrderEq s\\
	&\iff \underbrace{s \in \LB(\Succ(s)) \wedge s \in \UB(\Succ(s))}_{\UB(X) = \{ s \in S \mid X \reachOrderEq s \} \text{ and }\LB(X) = \{ s \in  S \mid s \reachOrderEq X \}} \\
	&\overset{\mathtext{Eq.~(\ref{eq:lem:reachrelativesucc:claim})}}{\iff} s \in \LB(\Succ(s)) \vee s \in \UB(\Succ(s))\\
	&\iff \neg \big(\LB(\Succ(s)) \reachOrder[R,T] s\big) \vee \neg\big( s \reachOrder[R,T] \UB(\Succ(s))\big)\\ 
	&\iff \neg \big(\LB(\Succ(s)) \reachOrder[R,T] s \reachOrder[R,T] \UB(\Succ(s))\big)
	\end{align*}
	
	We are left to prove our claim (\ref{eq:lem:reachrelativesucc:claim}):
	\begin{align*}
	& s \in \LB(\Succ(s)) \vee s \in \UB(\Succ(s))\\
	&\iff \forall s' \in \Succ(s) \forall \vec{u} \in R . \sol[s]{T}{\pdtmc}(\vec{u}) \leq \sol[s']{T}{\pdtmc}(\vec{u})\\
	&\qquad\qquad \vee \forall s' \in \Succ(s). \forall \vec{u} \in R .\sol[s]{T}{\pdtmc}(\vec{u}) \geq \sol[s']{T}{\pdtmc}(\vec{u})\\
	&\iff \underbrace{{\forall s' \in \Succ(s). \forall \vec{u} \in R .\sol[s]{T}{\pdtmc}(\vec{u}) = \sol[s']{T}{\pdtmc}(\vec{u})}}_{\text{as }\forall \vec{u} \in R. \sol[s]{T}{}(\vec{u}) = \sum_{s'\in\Succ(s)} \sol[s']{T}{}(\vec{u})} \\
	&\iff \forall s' \in \Succ(s). \forall \vec{u} \in R . \sol[s]{T}{\pdtmc}(\vec{u}) \leq \sol[s']{T}{\pdtmc}(\vec{u})	\\
	&\qquad\qquad\wedge \forall s' \in \Succ(s). \forall \vec{u} \in R .\sol[s]{T}{\pdtmc}(\vec{u}) \geq \sol[s']{T}{\pdtmc}(\vec{u}) \\
	&\iff s \in \LB(\Succ(s)) \wedge s \in \UB(\Succ(s)).
	\end{align*}
	\qed
\end{proof}

\subsection{Proof of Lem.~\ref{lem:forwardreasoning}}
\lemForwardreasoning*
\begin{proof}[Lem.~\ref{lem:forwardreasoning}]
	First of all, we observe that $\forall \vec{u} \in R$:
	\begin{align*}
	\sol[s]{T}{}(\vec{u})  &=\left( \probdtmc(s,s_1) \cdot \sol[s_1]{T}{}\right)(\vec{u})  +\left(\probdtmc(s,s_2) \cdot \sol[s_2]{T}{}\right)(\vec{u}) \\
	&=\left(\probdtmc(s,s_1) \cdot \sol[s_1]{T}{}\right)(\vec{u})  + \left((1-\probdtmc(s,s_1)) \cdot \sol[s_2]{T}{}\right)(\vec{u}) 
	\end{align*}
	\begin{enumerate}
		\item $s_1 \equiv s$. That is, $\forall \vec{u} \in R.\,\sol[s]{T}{}(\vec{u})  =\sol[s_1]{T}{}(\vec{u}) $. We obtain for each $\vec{u} \in R$: 
		\[
		\sol[s]{T}{}(\vec{u})  = \left(\probdtmc(s,s_1) \cdot \sol[s]{T}{}\right)(\vec{u})  + \left((1-\probdtmc(s,s_1)) \cdot \sol[s_2]{T}{}\right)(\vec{u})\]
		From which follows:
		\[\left((1-\probdtmc(s,s_1))\cdot \sol[s]{T}{}\right)(\vec{u})  = \left((1-\probdtmc(s,s_1)) \cdot \sol[s_2]{T}{}\right)(\vec{u}) \]

		So, \(\forall \vec{u} \in R.\,\sol[s]{T}{}(\vec{u})  =\sol[s_2]{T}{}(\vec{u})  \). Therefore, $s\equiv s_2$.
		\item $s_1\reachOrder s$. That is, $\forall \vec{u} \in R.\,\sol[s_1]{T}{}(\vec{u})   <\sol[s]{T}{}(\vec{u}) $. We obtain for each$ \vec{u} \in R$:
		\[\sol[s]{T}{}(\vec{u})  < \left(\probdtmc(s,s_1) \cdot \sol[s]{T}{}\right)(\vec{u})  + \left((1-\probdtmc(s,s_1)) \cdot \sol[s_2]{T}{}\right)(\vec{u}) \]
		From which follows:
		\[\left((1-\probdtmc(s,s_1))\cdot \sol[s]{T}{}\right)(\vec{u})  < \left((1-\probdtmc(s,s_1)) \cdot \sol[s_2]{T}{}\right)(\vec{u}) \]
		So, \(\forall \vec{u} \in R.\,\sol[s]{T}{}(\vec{u})  <\sol[s_2]{T}{}(\vec{u})  \). Therefore, $s\reachOrder s_2$.
		\item $s\reachOrder s_1$. That is, $\forall \vec{u} \in R.\,\sol[s]{T}{}(\vec{u})   <\sol[s_1]{T}{}(\vec{u}) $. We obtain for each $\vec{u} \in R$:
		\[\sol[s]{T}{}(\vec{u})  > \left(\probdtmc(s,s_1) \cdot \sol[s]{T}{}\right)(\vec{u})  + \left((1-\probdtmc(s,s_1)) \cdot \sol[s_2]{T}{}\right)(\vec{u}) \]
		From which follows:
		\[\left((1-\probdtmc(s,s_1))\cdot \sol[s]{T}{}\right)(\vec{u})  > \left((1-\probdtmc(s,s_1)) \cdot \sol[s_2]{T}{}\right)(\vec{u}) \]
		So, \(\forall \vec{u} \in R.\,\sol[s]{T}{}(\vec{u})  <\sol[s_2]{T}{}(\vec{u})  \). Therefore, $s\reachOrder s_2$.\qed
	\end{enumerate}
\end{proof}

\clearpage
\pagebreak

\section{Full Algorithm}
\label{app:completealg}

We consolidate the algorithm developed in Sect.~\ref{subsec:monousingro}--\ref{subsec:cycles}, resulting in Alg.~\ref{alg:full}.
To treat cycles, we can apply Lem.~\ref{lem:forwardreasoning} (l.~\ref{alg:line:beginSCCOrder}-\ref{alg:line:endSCCOrder}). 
Also, we may add a cycle breaking state $s'$ to the RO-graph (l.~\ref{alg:line:beginSCCOracle}-\ref{alg:line:endSCCOracle}). 
If assumptions are needed (l.~\ref{alg:line:elseAssumptions}-\ref{alg:line:continue}), then we push the three assumptions to the Queue. 
Therefore, l.~\ref{alg:line:push}, should not be executed, so we continue to the next iteration of the while loop after adding assumptions (l.~\ref{alg:line:continue})
At the end, we check for global monotonicity (l.~\ref{alg:line:beginMonCheck}-\ref{alg:line:endMonCheck}).

\begin{remark}
Any extension of an inconclusive ordering (recall Example~\ref{ex:inconclusiveorder}) is inconclusive too.
If the goal is to find a witness for global monotonicity, an orderings that is inconclusive for every parameter can be immediately discarded. 
\end{remark}

\ownsubsection{Discharging assumptions using a local NLP} Example~\ref{ex:NLP} showcases how a local NLP can be used to discharge assumptions.
\begin{example}
	\label{ex:NLP}
	Consider Fig.~\ref{fig:reqassum:pmc} and $s_2 \reachOrder s_3$.
	Let $s_4 \reachOrderAssumption s_5$. 
	For $R = (0.5,0.8) \times (0.1,0.3)$, the satisfiability of the following conjunction is checked, where $x_i$ encodes the reachability probability from state $s_i$:	
	\begin{align}
	0.5 & < p < 0.8 ~\land~ 0.1 < q < 0.3 \label{eq:exenc:region}\\
	x_2 &= p \cdot x_4 + (1{-}p) \cdot x_5 \label{eq:exenc:s1}\\
	x_3 &= q \cdot x_4 + (1{-}q) \cdot x_5 \label{eq:exenc:s2}\\
	0   & < x_4 < 1 ~\land~ 0 < x_5 < 1 \label{eq:exenc:probs}\\
	x_4 & < x_5 \label{eq:exenc:order}\\
	x_2 & \geq x_3 \label{eq:exenc:obligation}
	\end{align}
	Eq.~\eqref{eq:exenc:region} describes the region. Eqs.~\eqref{eq:exenc:s1}--\eqref{eq:exenc:s2} encode the reachability probabilities of $s_2$ and $s_3$, respectively. 
	For states $s_4, s_5$, we only know that the reachability probabilities are between $0$ and $1$ (Eq.~\eqref{eq:exenc:probs}), and $s_4 \reachOrder s_5$ (Eq.~\eqref{eq:exenc:order}).
	Finally, to validate the assumption $s_2 \reachOrder s_3$, we add constraint $\neg(s_2 \reachOrder s_3)$ (Eq.~\eqref{eq:exenc:obligation}).
	If the resulting constraint system has no solution, it follows $s_2 \reachOrder s_3$.
\end{example}

\ownsubsection{Treating cycles through SCC elimination.}
This method contracts each SCC into a set of states, one for each entry state of the SCC.
Applied to pMCs~\cite{DBLP:conf/qest/JansenCVWAKB14}, it preserves the reachability probabilities of target set $T$.
For each SCC, all non-entry states (i.e., states without incoming transitions from outside the SCC) are eliminated by state elimination~\cite{Daws04,param_sttt} and transitions between entry states are deleted~\cite{DBLP:conf/cav/DehnertJJCVBKA15}.
In the resulting acyclic pMC $\pdtmc'$, transitions from entry state $s$ of an eliminated SCC directly lead to states $t$ outside this SCC. 
Their transition probabilities in $\pdtmc'$ encode the (multi-step) reachability of reaching $t$ from $s$ in the cyclic $\pdtmc$.
This procedure works well if the pMC has several SCCs, but if the pMC is just a single SCC, this yields the complete solution function, which we intended to avoid.
If an SCC has many successor states (outside the SCC), this results in multiple successor states in $\pdtmc'$, possibly leading to multiple assumptions in Alg.~\ref{alg:latticeConstruction}+\ref{alg:assumptions:naive}.

\begin{algorithm}
	\caption{Monotonicity checking without assumption discharging}
	\label{alg:full}
	\begin{algorithmic}[1]
		\REQUIRE Acyclic pMC $\pDtmcInit$ 
		\ENSURE RO = a set of assumptions $\mathcal{A}$ with a monotonicity array
		\STATE Orders $\gets$ $\emptyset$
		\STATE Queue $\gets$ $\left(\mathcal{A} : \emptyset, \reachOrder : \{ (\bad,\good) \}, S' : S \setminus \{ \good, \bad \}\right)$
		\WHILE {Queue not empty}
		\STATE $\mathcal{A}, \reachOrderEqAssumption, S'$ $\gets$ Queue.pop()
		\IF{$S' = \emptyset$}
		\STATE Orders $\gets$ Orders $\cup$ $\{ (\mathcal{A}, \reachOrderEqAssumption) \}$.
		\ELSE
		\STATE select $s \in S'$ s.t. $s$ topologically last or if $s$ lies within an SCC and this SCC is topologically last 
		\IF{$\exists s' \in \Succ(s)$ s.t.\  $\Succ(s) \subseteq [s']$}
	
		\STATE extend RO-graph($\reachOrderEqAssumption$) with:  $s \equiv \Succ(s)$
		\ELSIF {$\reachOrderEqAssumption$ a total order for $\Succ(s)$}
		\label{alg:line:latticeConstruction:addBetween:new}
		\STATE extend RO-graph($\reachOrderEqAssumption$) with:\\ $s \reachOrderAssumption \min\UB(\Succ(s))$ and $\max\LB(\Succ(s)) \reachOrderAssumption s$ 
		\ELSIF {$\reachOrderEqAssumption$ is not a total order for $\Succ(s) = \{s_1, s_2\}$ and $s_1 \reachOrderAssumption s$ }\label{alg:line:beginSCCOrder}
		\STATE extend RO-graph($\reachOrderEqAssumption$) with: $s \reachOrderAssumption s_2$
		\ELSIF {$\reachOrderEqAssumption$ is not a total order for $\Succ(s) = \{s_1, s_2\}$ and $s_2 \reachOrderAssumption s$}
		\STATE extend RO-graph($\reachOrderEqAssumption$) with: $s_2 \reachOrderAssumption s$\label{alg:line:endSCCOrder}
		\ELSIF{$s$ part of an SCC and $\Succ(s) = \{s_1, s_2\}$}\label{alg:line:beginSCCOracle}
		\STATE pick a cycle breaking state $s'\in S'$ and $s'$ in SCC
		\STATE set $s$ to $s'$
		\STATE extend RO-graph($\reachOrderEqAssumption$) with: $\bad \reachOrderAssumption s$ and $s \reachOrderAssumption \good$\label{alg:line:endSCCOracle}
		\ELSE[$\reachOrderEqAssumption$ is not a total order for $\Succ(s)$]\label{alg:line:elseAssumptions}
		\STATE pick $t_1, t_2 \in \Succ(s)$ s.t.\ neither $t_1 \reachOrderEqAssumption t_2$ nor $t_2 \reachOrderEqAssumption t_1$
		\STATE Queue.push($(\mathcal{A}_\prec \cup \{ (t_1,t_2) \}, \mathcal{A}_\equiv), \reachOrderEqAssumption, S'$) 
		\STATE	Queue.push($(\mathcal{A}_\prec \cup \{ (t_2,t_1)  \}, \mathcal{A}_\equiv), \reachOrderEqAssumption, S'$) 
		\STATE Queue.push($(\mathcal{A}_\prec, \mathcal{A}_\equiv \cup \{ (t_1,t_2) \}), \reachOrderEqAssumption, S'$) 
		\STATE \textbf{continue} \label{alg:line:continue}
		
		\ENDIF
		\STATE Queue.push($\mathcal{A}, \reachOrderEqAssumption, S' \setminus \{ s \}$)\label{alg:line:push}
		\ENDIF

		\ENDWHILE
		\STATE
		\FOR {every order $\reachOrderEqAssumption$ in Orders} \label{alg:line:beginMonCheck}
		\STATE set for every parameter $p \in \Paramvar$ mon[$p$]$\rightarrow$\{T,T\}
		\FOR {every parametric state $s$}
		\STATE let $\Succ(s) = s_1, \ldots, s_n$, sorted based on $\reachOrderEqAssumption$
		\FOR {every parameter $p \in \Paramvar$ occuring at $s$, and mon[p] $\neq$ \{F,F\}}
		\IF {$\not\exists i \in [1, \hdots, n].\Big(\forall j \leq i.\; \monIncr[p]{f_j} \text{ and } \forall j > i.\; \monDecr[p]{f_j}\Big)$}
		\STATE mon[$p$][0] = F
		\ENDIF
		\IF{$\not\exists i \in [1, \hdots, n].\Big(\forall j \leq i.\; \monDecr[p]{f_j} \text{ and } \forall j > i.\; \monIncr[p]{f_j}\Big)$ }
		\STATE mon[$p$][1] = F\label{alg:line:endMonCheck}
		\ENDIF
		\ENDFOR
		\ENDFOR
		Result $\gets$ Result $\cup$ $\{ (\mathcal{A}, \text{mon}) \}$
		\ENDFOR
		
		\RETURN Result
	\end{algorithmic}
\end{algorithm}

\end{document}